\newtheorem{theorem}{Theorem}[section]
\newtheorem{proposition}[theorem]{Proposition}
\newtheorem{lemma}[theorem]{Lemma}
\newtheorem{remark}{Remark}[section]
\newtheorem{example}{Example}[section]
\newtheorem{definition}{Definition}[section]
\newtheorem{claim}{Claim}
\definecolor{myred}{RGB}{255,50,50}         % revision version
\title{
Egalitarian solution for games \\ with discrete side payment}
\author{Takafumi Otsuka%
\thanks{Department of Economics and Business Administration,
Tokyo Metropolitan University, 
Tokyo 192-0397, Japan, 
\texttt{otsuka-takafumi1@ed.tmu.ac.jp}}
}%%author
\date{\today}
\begin{document}
\maketitle

%-------- abstract (100 to 200 words) -------- 
\begin{abstract}
In this paper, we study the egalitarian solution for games with discrete 
side payment,
where the characteristic function is integer-valued and payoffs of 
players are integral vectors.
The egalitarian solution,
introduced by Dutta and Ray in 1989,
is a solution concept for transferable utility cooperative games
in characteristic form,
which combines commitment for egalitarianism and promotion of indivisual 
interests in a consistent manner.
We first point out that the nice properties of the egalitarian solution 
(in the continuous case)
do not extend to games with discrete side payment.
Then we show that the Lorenz stable set, which may be regarded a variant of
the egalitarian solution, has nice properties
such as the Davis and Maschler reduced game property and the converse 
reduced game property.
For the proofs we utilize recent results
in discrete convex analysis on decreasing minimization on an M-convex set
investigated by Frank and Murota. 
\end{abstract}
%-------- keywords (2 to 6 keywords) ------- 
%\keyword{Game theory, discrete optimization, discrete convex analysis, 
%cooperative game, egalitarian solution, discrete side payment}
%------------------------------------------------------ 
\section{Introduction}
%------------------------------------------------------ 
The egalitarian solution is a solution concept for transferable utility 
cooperative games
in characteristic form
which combines commitment for egalitarianism
and promotion of indivisual interests in a consistent manner.
This concept was introduced by Dutta--Ray (1989) \cite{Dutta-Ray1}.  
\par

The egalitarian solution is studied extensively 
in the literature.
For example, Arin et al. \cite{Arin_et_al}, Dutta \cite{Dutta}, and 
Klijn et al. \cite{Klijn}
axiomatise the egalitarian solution of Dutta--Ray for convex games.
Branzei et al. \cite{Branzei}, Dietzenbacher et al. 
\cite{Dietzenbacher}, Hokari \cite{Hokari,Hokari2},
and Llerena--Mauri \cite{Llerena=Mauri2017}
considered modifications
of the egalitarian solution
so that the solution
exists for a wider class of games.
These studies are mostly concerned with the case where the 
characteristic function is real-valued
and, accordingly, the side payment is real-valued.

Substantial connection has been recognized between the egalitarian solution
in convex games and the polymatroid theory in optimization.
Indeed, the core of a convex game is nothing but the base polyhedron of 
a polymatroid.
In the theory of polymatroids and submodular functions,
Fujishige (1980) \cite{Fujishige}
had introduced the concept of
{\em lexicographically optimal base},
and this concept is essentially equivalent to the egalitarian solution 
in convex games,
as noted by Fujishige \cite{FujishigeSubm} and Hokari--Uchida 
\cite{HokariUchida2004}.
In particular, the {\em principal partition} of Fujishige 
\cite{Fujishige} plays the decisive role
to clarify the properties of
the egalitarian solution in convex games
and also to develop algorithms for finding it.
\par

In this paper we are interested in the egalitarian solution for games 
with discrete side payment,
where the characteristic function is integer-valued and payoffs of 
players are integral vectors.
This study is partly motivated by a recent development in discrete 
convex analysis,
which is a theory of discrete convexity for functions on integer lattice 
points
(see \cite{Mdca98,Mdcasiam,Mbonn09,Mdcaeco16}).
Frank--Murota \cite{FK1,FK2}
recently investigated  the discrete decreasing minimization problem,
which is concerned with lexicographically minimal integral vectors in an 
integral base polyhedron.
This is a discrete counterpart of the work by Fujishige \cite{Fujishige},
and, in particular, the discrete counterpart of the principal partition
is established as the {\em canonical partition}.
The main objective of this paper is to clarify the properties of the 
egalitarian solution
in games with discrete side payment
by making use of these results of lexicographically minimal 
(decreasingly minimal)
integral elements in an integral base polyhedron. 
\par

The results of this paper are summarized as follows.
First, we show by an example that, unlike the case of $\mathbf{R}$,
the egalitarian solution for convex games with discrete side payment is 
not equivalent to
the lexicographically minimal (decreasingly minimal) element.
Accordingly, the egalitarian solution in the case of $\mathbf{Z}$ 
fails to have nice properties
of the egalitarian solution in the continuous case.
This motivates us consider the Lorenz stable set 
(or equivalently Lorenz maximal imputation).
The Lorenz stable set is
introduced by Arin--Inarra \cite{Arin_Inarra} and Hougaard et al. 
\cite{HougaardPelegPetersen2001}
and is defined as the subset of the core consisting of the elements
that are not Lorenz-dominated by any other element of the core.
We show that the Lorenz stable set has nice properties
such as the Davis and Maschler reduced game property \cite{DM} and the 
converse reduced game property \cite{Peleg}.
Our analysis of the Lorenz stable set relies heavily on the
recent results on the discrete decreasing minimization problem.
\par

This paper is organized as follows.
Sections \ref{Section:EgainR} and \ref{Section:PolyandDDM} 
are brief reviews on the egalitarian solution
and discrete decreasing minimization problem, respectively.
In Section \ref{Section:EgainZ}, 
we investigate the properties of the egalitarian solution
in games with discrete side payment
in comparison with the egalitarian solution in continuous variables.
In Sections \ref{Section:RGP} and \ref{Section:CRGP}, 
we clarify the fundamental properties of the Lorenz 
stable set
by utilizing the results
on discrete decreasing minimization.
%------------------------------------------------------ 
\section{Egalitarian solution in the continuous case}
\label{Section:EgainR}
We provide a brief summary on the egalitarian solution 
of Dutta--Ray \cite{Dutta-Ray1}.

\subsection{Definition and Notation}\label{Subsec2Def}
We consider a transferable utility game 
in characteristic function form. 
There are $n$ players and let $N = \{1, 2, \dots, n\}$. 
A {\em coalition} is a nonempty subset of $N$, 
whereas
$N$ is called the {\em grand coalition}. 
The worth of a coalition $S$ is given by a scalar $v(S)$. 
We assume $v(\emptyset) = 0$ throughout this paper.
A pair $(N,v)$ is called a {\em game}.
We denote the set of games by $\Gamma$. 
\par

For a vector $x \in \mathbf{R}^N$, 
we sometimes abbreviate $\sum_{i = 1}^N x_i$ to $x(N)$.
Let $(N,v)$ be a game. The set 
$X(N,v) = \{x \in \mathbf{R}_+^N \mid x(N) \leq v(N)\}$
is called the set of {\em feasible payoff vectors} 
for the game $(N,v)$. 
A {\em solution} on $\Gamma$ is a function $\sigma$ 
which associates with each game $(N,v) \in \Gamma$ 
a subset $\sigma(N,v)$ of $X(N,v)$.
\par

For a game $(N,v)$, we call $x$ an {\em imputation} 
if $x_i \geq v(\{i\})$ for all $i \in N$ and $x(N) = v(N)$. 
The restriction of $N$ to $S \subseteq N$ is denoted by $x_S$. 
For two vectors $x, y \in \mathbf{R}^N$, 
we write $x > y$ if $x_i \geq y_i$ 
for all $i = 1, \dots, n$, with strict inequality for some $i$. 
For a vector $x$, let $x{\downarrow}$ denote 
the vector obtained from $x$ by rearranging its components 
in a decreasing order. 
The Lorenz-domination is defined as follows:

\begin{definition}\label{Def:LorenzdominateDec}
\rm
For two vectors $x$ and $y$ 
in $\mathbf{R}^N$ with $x(N) = y(N)$, 
we say that $x$ {\em Lorenz-dominates} $y$ 
if $\sum_{j=1}^i (x{\downarrow})_j \leq \sum_{j=1}^i (y{\downarrow})_j$ 
holds for all $i = 1, \dots, n$, 
with strict inequality for some $i$.
\qed\end{definition}

Also, for $x, y \in \mathbf{R}^N$, 
we say that $x$ and $y$ are {\em value-equivalent}
if $x{\downarrow} = y{\downarrow}$ holds.

Throughout this paper, we define the Lorenz-domination
by a decreasing order in accordance with Dutta--Ray \cite{Dutta-Ray1}. 
Other papers, however, adopt an increasing order as follows:

\begin{definition}\label{DefofLorenzdominateInc}
\rm
For a vector $x$, 
let $x{\uparrow}$ denote the vector obtained from $x$ 
by rearranging its components in an increasing order. 
For two vectors $x$ and $y$ in $\mathbf{R}^N$ with $x(N) = y(N)$, 
we say that $x$ {\em Lorenz-dominates} $y$ 
if $\sum_{j=1}^i (x{\uparrow})_j \geq \sum_{j=1}^i (y{\uparrow})_j$ 
for all $i = 1, \dots, n$, 
with strict inequality for some $i$. 
\end{definition}
The decreasing order is adopted in 
Dutta \cite{Dutta}, 
Dutta--Ray \cite{Dutta-Ray1}, 
Llerena \cite{Llerena2012}, and
Llerena--Mauri \cite{Llerena=Mauri2016}, whereas 
the increasing order is in
Arin--Inarra \cite{Arin_Inarra}, 
Arin et al. \cite{AKV2008}, 
Fei--Fields \cite{FieldsFei1978}, 
Hokari \cite{Hokari, Hokari2},  
Llerena--Mauri \cite{Llerena=Mauri2017}, and
Shaked--Shanthikumar \cite{SS2007}.
We note that 
this difference in the definition of Lorenz-domination 
does not affect the results of this paper; 
see Remark \ref{equivaofIncmaxDecmin}. 
We also mention that the Lorenz-domination is defined equivalently as follows. 
For two payoff vectors $x, y \in \mathbf{R}^N$
with $x(N) = y(N)$, 
\begin{align}
x ~ \mbox{Lorenz-dominates} ~ y \Leftrightarrow \mbox{Its Lorenz curve lies nowhere below that of} ~ y. \nonumber
\end{align}
\noindent
For example, Patrick \cite{Moyes1987} and 
Tatiana \cite{TatianaDamjanovic} adopt this definition. 
\par

Next, we define the core, 
which is a central solution concept 
of cooperative game theory 
(cf., \cite{PelegSudholter}).

\begin{definition} \label{defofcore}
\rm
For any coalition $S$, the {\em core} of $S$ is defined by
\[
\pushQED{\qed}
C(S,v) = \{ x \in \mathbf{R}^S \mid 
x(S) = v(S), x(T) \geq v(T) ~ (\forall T \subsetneq S) \}. \qedhere
\popQED
\]
\end{definition}

\begin{definition}\label{Defofconvexgame}
\rm
We call a game $(N,v)$ a
{\em convex game}
if $v$ is a supermodular function, that is,
\begin{equation*}
v(S) + v(T) \leq v(S \cup T) + v(S \cap T)  
\end{equation*}
for all $S, T \subseteq N$.
We denote the set of convex games by $\Gamma^{\rm c}$.
\qed\end{definition}

The notion of the egalitarian solution will now be described. 
First, 
the {\em Lorenz map} $E$ is defined on the domain 
$\{ A \mid A\subseteq \mathbf{R}^k  \,(\exists k \in \{1, 2, \dots, n\}), 
\exists u \in \mathbf{R},  \forall x \in A \colon 
\sum_{i=1}^k x_i = u \}$. 
For each such set $A$, 
$E(A)$ denotes the set of all elements in $A$ that 
are not Lorenz-dominated within $A$. 
\par
	
Next, we define the {\em Lorenz core} 
introduced by Dutta--Ray \cite{Dutta-Ray1}.
See \cite{Dutta-Ray1} for the details about the Lorenz core. 
The Lorenz core is defined recursively as follows. 
The Lorenz core of a singleton coalition is 
$L(\{i\}) = \{v(\{i\})\} ~ (i \in N)$. 
Now suppose that the Lorenz cores 
for all coalitions of cardinality $k-1$ or less have been defined, 
where $2 \leq k < n$. 
The Lorenz core of coalitions of size $k$ is defined by
\begin{align*}
L(S,v) = \{x \in \mathbf{R}^S \mid x(S) = v(S)~\mbox{and there is no} ~ T \subsetneq S~\mbox{and} \\ 
y \in E(L(T,v))~\mbox{such that~ }x_T < y \}.
\end{align*}
\noindent
For a game $(N,v) \in \Gamma$, 
we call an element of $E(L(N,v))$ an {\em egalitarian solution}.
There is an inclusion between the core and the Lorenz core.

\begin{proposition}\label{Prop:InclusionCoreLo}
For any $S \subseteq N$, 
\begin{align}
C(S,v) \subseteq L(S,v).  \label{eq:coreLorenzcore}
\end{align}
\end{proposition}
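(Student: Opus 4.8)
The plan is to prove \eqref{eq:coreLorenzcore} directly from the recursive definition of the Lorenz core, by contradiction; no induction on $|S|$ is actually needed. Fix $S \subseteq N$ and take an arbitrary $x \in C(S,v)$. The two ingredients I would isolate first are: (i) from $x \in C(S,v)$ we have $x(S) = v(S)$ and $x(T) \geq v(T)$ for every $T \subsetneq S$; and (ii) from the recursive definition, every $z \in L(T,v)$ satisfies $z(T) = v(T)$ (this equality is built into the defining set at each level of the recursion), and since the Lorenz map satisfies $E(A) \subseteq A$ by construction, the same equality holds for every $z \in E(L(T,v))$.

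Now I would suppose, toward a contradiction, that $x \notin L(S,v)$. Since $x(S) = v(S)$ already holds by (i), membership in $L(S,v)$ can only fail because there exist $T \subsetneq S$ and $y \in E(L(T,v))$ with $x_T < y$. Recalling that $<$ on vectors means coordinatewise $\leq$ with strict inequality in at least one coordinate, summing over the coordinates in $T$ gives $x(T) < y(T)$. Combined with $y(T) = v(T)$ from (ii), this yields $x(T) < v(T)$, contradicting (i). Hence $x \in L(S,v)$, which is the claimed inclusion. For completeness I would also note that the base case $|S| = 1$ is immediate: for $S = \{i\}$ the only proper subcoalition is $\emptyset$, so $C(\{i\},v) = \{v(\{i\})\} = L(\{i\})$.

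I do not expect a genuine obstacle in this argument; it is essentially a one-line observation once the definitions are unpacked. The only two points worth stating carefully are that $E(A) \subseteq A$ for every admissible set $A$ (immediate from the definition of $E$ as the set of elements of $A$ undominated within $A$) and that $x_T < y$ forces the \emph{strict} inequality $x(T) < y(T)$ on the sums, which is exactly the vector order introduced just before Definition \ref{Def:LorenzdominateDec}. If one prefers a more structural presentation, the same reasoning can be packaged as an induction on $|S|$ with inductive hypothesis $C(T,v) \subseteq L(T,v)$ for $|T| < |S|$, but as the argument shows, only the defining equality $z(T) = v(T)$ for $z \in L(T,v)$ is really used, so the induction is not essential.
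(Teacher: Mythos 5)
Your proof is correct and follows essentially the same route as the paper: assume $x \notin L(S,v)$, extract $T \subsetneq S$ and $y \in E(L(T,v))$ with $x_T < y$, and conclude $x(T) < y(T) = v(T)$, contradicting membership in the core. Your version is in fact slightly more careful than the paper's, since you explicitly note that the condition $x(S) = v(S)$ cannot be the source of the failure of membership in $L(S,v)$ when $x \in C(S,v)$.
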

\begin{proof}
Assume that
$x \notin L(S,v)$, 
which implies that 
$y > x_T$
for some $T \subsetneq S ~ (T \neq \emptyset)$ 
and some $y \in E(L(T,v))$.
By this inequality and $y(T) = v(T)$, 
we obtain $x(T) < y(T) = v(T)$, 
which implies $x \notin C(S,v)$.
Therefore, (\ref{eq:coreLorenzcore}) holds.
\end{proof}

\begin{remark}\label{Remark:WeightedEga}
\rm
For the weighted egalitarian solution, 
see Hokari \cite{Hokari2} and Koster \cite{Koster1999}.
\end{remark}

\subsection{Properties of egalitarian solution}\label{Sub2:PropertyofEga}
Here we describe the properties of the egalitarian solution 
shown by Dutta--Ray \cite{Dutta-Ray1}.
First, the following theorem shows that 
the egalitarian solution is unique if it exists at all.

\begin{theorem}\label{Theorem:ExistenceofEgaR}{\rm (Dutta--Ray \cite{Dutta-Ray1})}
There is at most one egalitarian solution in any game.
\qed\end{theorem}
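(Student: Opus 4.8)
The plan is to show uniqueness by a Lorenz-comparison argument that works recursively on coalition size, mirroring the recursive definition of the Lorenz core. Suppose, for contradiction, that a game $(N,v)$ admits two distinct egalitarian solutions $x, y \in E(L(N,v))$. Since both lie in $E(L(N,v))$ and both satisfy $x(N) = y(N) = v(N)$, neither Lorenz-dominates the other; the first thing to establish is that this forces $x$ and $y$ to be value-equivalent, i.e. $x{\downarrow} = y{\downarrow}$. This is where the real content sits: one needs a lemma saying that any maximal (undominated) element of a set $A$ on which the coordinate sum is constant is Lorenz-comparable to, or value-equivalent with, every other maximal element — equivalently, that the set of ``Lorenz curves'' of points in $E(A)$ has a unique topmost curve. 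I would prove this by taking the pointwise ``meet'' of the two Lorenz curves (a convexity/averaging argument: the vector $\tfrac12(x{\downarrow} + y{\downarrow})$ has a Lorenz curve lying weakly above both) and checking that some element realizing a curve at least as high as both is still available in the relevant Lorenz core, contradicting maximality unless $x{\downarrow} = y{\downarrow}$.

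Next I would upgrade value-equivalence to actual equality $x = y$, using the structure of the Lorenz core rather than just value-equivalence. The key step is to identify the coalition $S \subsetneq N$ (or the family of such coalitions) on which an egalitarian solution is ``pinned down'' by the constraints coming from $E(L(T,v))$ for smaller $T$: intuitively, the players receiving the smallest common payoff form a coalition whose restricted allocation must itself be the egalitarian solution of the subgame, and by the induction hypothesis that subgame has a unique egalitarian solution. Peeling off this coalition and repeating the argument on $N \setminus S$ drives the induction. The base case is the singleton, where $E(L(\{i\},v)) = \{v(\{i\})\}$ is trivially a single point.

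The main obstacle I anticipate is the passage from value-equivalence to pointwise equality: value-equivalence only says the sorted payoff vectors agree, but the egalitarian solution is a vector attached to named players, so two value-equivalent allocations could in principle differ by a permutation of equal-ranked coordinates among players who are \emph{not} symmetric in $v$. Ruling this out requires exploiting the recursive ``$x_T < y$'' blocking condition in the definition of $L(S,v)$: if $x$ and $y$ are value-equivalent egalitarian solutions that differ, one shows that the lowest-payoff coalition under $x$ cannot coincide as a set with that under $y$, and then that one of the two allocations is blocked via an $E(L(T,v))$-element for the smaller of these coalitions — contradicting membership in $L(N,v)$. Handling this carefully, and making sure the recursion on coalition size is set up so that the blocking coalition $T$ is strictly smaller, is the delicate part; the rest is bookkeeping with the definitions of $E$, $L$, and Lorenz-domination.
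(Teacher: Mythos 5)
The paper does not actually prove this statement---it is quoted from Dutta--Ray \cite{Dutta-Ray1} and stated with an immediate \qed---so your attempt has to be judged against the standard argument rather than against anything in the text. Judged that way, the central lemma you propose is false. If $z = \tfrac12(x{\downarrow}+y{\downarrow})$, then since $x{\downarrow}+y{\downarrow}$ is already sorted decreasingly, the cumulative-sum vector satisfies $\overline{z}_k = \tfrac12(\overline{x}_k+\overline{y}_k)$ exactly: the Lorenz curve of the average is the \emph{average} of the two Lorenz curves, so it lies strictly between them wherever they differ, not weakly above both. Concretely, take $x{\downarrow}=(2,2,0)$ and $y{\downarrow}=(3,0.5,0.5)$ (equal sums, neither Lorenz-dominates the other); the average is $(2.5,1.25,0.25)$ with $\overline{z}=(2.5,3.75,4)$, whereas a common Lorenz-dominator would need $\overline{z}\leq\min(\overline{x},\overline{y})=(2,3.5,4)$. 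The correct construction of a common dominator is not the average but the pointwise minimum $m_k=\min(\overline{x}_k,\overline{y}_k)$: since $\overline{x}$ and $\overline{y}$ are concave in $k$, so is $m$, hence its increment sequence is decreasing and realizable as some $w{\downarrow}$ with $\overline{w}=m$. This is the greatest-lower-bound construction in the majorization order, and it is what the incomparability of $x$ and $y$ actually buys you.

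Even after that repair, the real content of the theorem is untouched: you must show that some vector with the dominating profile $w{\downarrow}$, \emph{assigned to named players}, survives all the blocking constraints defining $L(N,v)$. The theorem is asserted for arbitrary games, where $L(N,v)$ is not convex and is not closed under Lorenz-improving rearrangements (the paper's Example \ref{Example:Lorenzcore} already shows an egalitarian solution failing to dominate a Lorenz-core element, so no naive ``improvements stay in $L$'' lemma can hold). Dutta--Ray handle this by induction on $|N|$, using the induction hypothesis that $E(L(T,v))$ is a singleton for every $|T|<n$ to control the blocking sets; your sketch gestures at this recursion but the step ``checking that some element realizing a curve at least as high as both is still available in the relevant Lorenz core'' is precisely the proof, not a checkable afterthought. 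Similarly, your second stage---peeling off the coalition of lowest-paid players and identifying its restricted payoff with the subgame's egalitarian solution---is a structural fact established by the decomposition algorithm of Section \ref{Sub2:decompositonalgorithm} only for \emph{convex} games; for a general game no such coalition structure is available, so the passage from value-equivalence to equality also needs a different argument.
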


Note that Theorem \ref{Theorem:ExistenceofEgaR} does not guarantee 
the existence of the egalitarian solution. 
The next theorem reveals that in any convex game, 
the egalitarian solution always exists and belongs to the core.

\begin{theorem}\label{Theorem:RelationCoreEga}{\rm (Dutta--Ray \cite{Dutta-Ray1})}
In convex games, 
an egalitarian solution exists and 
it is contained in the core, 
that is, 
for any $(N, v) \in \Gamma^{\rm c}$, 
we have $\emptyset \neq E(L(N,v)) \subseteq C(N,v)$.
\qed\end{theorem}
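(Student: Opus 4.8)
The plan is to exhibit, for a convex game $(N,v)$, a single payoff vector $x^{*}$, to check that $x^{*}\in C(N,v)$ and that $x^{*}$ is not Lorenz-dominated within $L(N,v)$, and then to invoke Theorem~\ref{Theorem:ExistenceofEgaR} for uniqueness. The vector $x^{*}$ is built by the Dutta--Ray greedy recursion. Among all coalitions of $N$, let $T_{1}$ be one that maximises the per-capita worth $v(S)/|S|$ and is inclusion-wise largest among the maximisers; supermodularity makes this choice unambiguous, because if $S$ and $S'$ both attain the maximum value $\lambda_{1}:=v(S)/|S|$, then $v(S\cup S')\ge v(S)+v(S')-v(S\cap S')\ge \lambda_{1}|S\cup S'|$ while also $v(S\cup S')\le\lambda_{1}|S\cup S'|$ by the maximality of $\lambda_{1}$, so $S\cup S'$ is a maximiser too. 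Put $x^{*}_{i}=\lambda_{1}$ for $i\in T_{1}$, replace $v$ by the reduced game $v_{1}(S)=v(S\cup T_{1})-v(T_{1})$ on $N\setminus T_{1}$, which is again convex, and iterate. This produces an ordered partition $T_{1},T_{2},\dots,T_{m}$ of $N$ together with levels $\lambda_{1}>\lambda_{2}>\dots>\lambda_{m}$, the strict decrease following from supermodularity and the maximality built into the choice of each block; $x^{*}$ is then defined to be the vector equal to $\lambda_{k}$ on $T_{k}$ for every $k$.

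I would first verify $x^{*}\in C(N,v)$ by downward induction along the partition. On the last block, $x^{*}$ is the equal division of the grand worth of the $m$-th reduced game, which lies in the core of that game precisely because $T_{m}$ is a per-capita maximiser there. For the inductive step, assuming $x^{*}$ restricted to the player set of the $k$-th reduced game lies in its core, one applies supermodularity of that reduced game to an arbitrary coalition $U$ and the block $T_{k}$, together with the identity relating consecutive reduced games, to obtain $v(U)\le x^{*}(U)$ one level up; a telescoping computation also gives $x^{*}(N)=v(N)$. Taking $k=1$ yields $x^{*}\in C(N,v)$, so in particular the core is nonempty, and Proposition~\ref{Prop:InclusionCoreLo} gives $x^{*}\in L(N,v)$.

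The remaining step, namely that $x^{*}$ is not Lorenz-dominated within $L(N,v)$, is the one I expect to be the main obstacle, because $L(N,v)$ is defined recursively through the Lorenz cores of \emph{subcoalitions} of $(N,v)$ carrying the original characteristic function, not through the reduced games, so a naive recursion on the blocks $T_{k}$ is unavailable. The route I would take is to establish, simultaneously by induction on $|N|$, two facts for convex games: that the greedy output coincides with Fujishige's lexicographically optimal base \cite{Fujishige}, hence is the unique Lorenz-maximal element of the core, $E(C(N,v))=\{x^{*}\}$; and that every element of $L(N,v)$ is Lorenz-dominated by, or value-equivalent to, some element of $C(N,v)$ --- indeed, one expects the Lorenz core and the core to coincide for convex games, so that the constraints defining $L(N,v)$ turn out to be no weaker than the core inequalities $x(S)\ge v(S)$. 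Granting these, if some $y\in L(N,v)$ strictly Lorenz-dominated $x^{*}$, there would be a core element $z$ weakly Lorenz-dominating $y$, hence strictly Lorenz-dominating $x^{*}$, contradicting the Lorenz-maximality of $x^{*}$ in $C(N,v)$. Hence $x^{*}\in E(L(N,v))$, so this set is nonempty, and Theorem~\ref{Theorem:ExistenceofEgaR} finally gives $E(L(N,v))=\{x^{*}\}\subseteq C(N,v)$. Proving the second fact --- that in the convex case the recursive Lorenz-core constraints are no weaker than the core inequalities --- via supermodularity and the inductive hypothesis on smaller player sets is the technical heart of the argument.
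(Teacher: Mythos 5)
There is a genuine gap in the second half of your argument. First, note that the paper itself gives no proof of this theorem: it is quoted from Dutta--Ray \cite{Dutta-Ray1}, so your attempt is a reconstruction. Your construction of $x^{*}$ by the greedy recursion and the verification that $x^{*}\in C(N,v)$ are sound and match the decomposition algorithm the paper recalls in Section~\ref{Sub2:decompositonalgorithm}. The problem is the step you yourself flag as the technical heart: you propose to prove that ``the Lorenz core and the core coincide for convex games,'' or at least that every element of $L(N,v)$ is weakly Lorenz-dominated by some element of $C(N,v)$, and to derive $x^{*}\in E(L(N,v))$ from the Lorenz-maximality of $x^{*}$ in the core. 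Both versions of this claim are false, and the paper's own Example~\ref{Example:Lorenzcore} (Dutta--Ray's Example~5) refutes them: there the game is convex, $w=(6.25,6.5,8.25)$ lies in $L(N,v)$ but violates the core inequality for $\{2,3\}$ (since $6.5+8.25=14.75<15$), so $L(N,v)\supsetneq C(N,v)$; moreover the egalitarian solution $e=(6,7,8)$ does not Lorenz-dominate $w$ (the second partial sums are $15$ versus $14.75$), and since $e$ weakly Lorenz-dominates every core element, transitivity of the partial-sum inequalities shows that \emph{no} core element weakly Lorenz-dominates $w$. Hence the bridge you want from ``Lorenz-maximal in $C(N,v)$'' to ``Lorenz-maximal in $L(N,v)$'' cannot be built by dominating Lorenz-core elements with core elements.

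The correct route, which is Dutta--Ray's, works directly with the recursive definition of $L(N,v)$ rather than with the core. One first shows by induction that for each block the equal split is the egalitarian solution of the corresponding subcoalition; then, if $y\in L(N,v)$ Lorenz-dominated $x^{*}$, the first partial-sum inequality would force $y_{i}\le\lambda_{1}$ for every $i$, so either $y$ equals $\lambda_{1}$ throughout $T_{1}$ or $y_{T_{1}}<z$ for the equal-split element $z\in E(L(T_{1},v))$, the latter contradicting $y\in L(N,v)$; in the former case one recurses on $N\setminus T_{1}$, where reconciling the reduced game $v_{k}$ with the Lorenz cores of subcoalitions of the \emph{original} game is the genuinely delicate point (and is exactly the difficulty you correctly anticipated). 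As written, your proof establishes $x^{*}\in C(N,v)\subseteq L(N,v)$ but not $x^{*}\in E(L(N,v))$, so the nonemptiness claim is not yet proved.
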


Moreover, the egalitarian solution has a nice property as follows.

\begin{theorem}\label{Theorem:EgaLorenzdominance}{\rm (Dutta--Ray \cite{Dutta-Ray1})}
In convex games, 
the egalitarian solution Lorenz-dominates 
every other element of the core.
\qed\end{theorem}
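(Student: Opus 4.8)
The plan is to show that if $x \in E(L(N,v))$ is the egalitarian solution of a convex game and $z \in C(N,v)$ is any core element, then either $x$ Lorenz-dominates $z$ or $x{\downarrow} = z{\downarrow}$, and in fact the second case forces $x = z$ by a separate argument (or one may simply strengthen "Lorenz-dominates" to "weakly Lorenz-dominates" and note the strict case). The key structural facts I would invoke are Theorem~\ref{Theorem:RelationCoreEga} (so $x$ exists and lies in $C(N,v)$), Proposition~\ref{Prop:InclusionCoreLo} (so $C(N,v) \subseteq L(N,v)$, hence every core element lies in the Lorenz core where $x$ is Lorenz-maximal), and the fact that on a convex game the core is the base polyhedron of a polymatroid, which is what makes Lorenz comparisons between core elements well-behaved.

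\textbf{Step 1: Reduce to comparing within the Lorenz core.} By Theorem~\ref{Theorem:RelationCoreEga}, $E(L(N,v)) \neq \emptyset$; let $x$ be its unique element (Theorem~\ref{Theorem:ExistenceofEgaR}). Take any $z \in C(N,v)$ with $z \neq x$. By Proposition~\ref{Prop:InclusionCoreLo}, $z \in L(N,v)$. Since $x \in E(L(N,v))$, $z$ does not Lorenz-dominate $x$; I need the converse, that $x$ Lorenz-dominates $z$. Note $x(N) = z(N) = v(N)$, so the Lorenz partial order applies.

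\textbf{Step 2: Use convexity to produce a Lorenz-improving move.} The crux is: in a convex game, if $z \in C(N,v)$ is not value-equivalent to the egalitarian solution $x$, then $x$ Lorenz-dominates $z$. I would argue by contradiction: suppose $x$ does not Lorenz-dominate $z$ and $x{\downarrow} \neq z{\downarrow}$. Then there is a smallest index $i$ with $\sum_{j=1}^i (x{\downarrow})_j > \sum_{j=1}^i (z{\downarrow})_j$. Using that $C(N,v)$ is the base polyhedron of the supermodular function $v$ — in particular that exchanging a small amount from a richer coordinate to a poorer coordinate of $z$ keeps us in the core as long as the relevant tight-set constraints are respected — I would construct $z' \in C(N,v)$ that Lorenz-dominates $z$ and is "closer" to the egalitarian structure, contradicting either the Lorenz-maximality of $x$ in $L(N,v)$ (via $z' \in C \subseteq L$) or the recursive definition of the Lorenz core. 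More directly: the standard averaging/transfer argument shows that the Lorenz-maximal element of a base polyhedron equals the lexicographically minimal (decreasingly minimal) base, and any other base is Lorenz-dominated by it; I would either reproduce that transfer argument or cite the equivalence (Fujishige \cite{Fujishige}, Hokari--Uchida \cite{HokariUchida2004}) noted in the introduction, which identifies $E(L(N,v))$ with the lexicographically optimal base and gives Lorenz-domination over all other bases.

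\textbf{Step 3: Rule out the value-equivalent non-equal case.} If $x{\downarrow} = z{\downarrow}$ but $x \neq z$, I would show this cannot happen for a convex game, or, if it can, fold it into the statement by the convention that value-equivalent vectors are identified; alternatively invoke that on a base polyhedron the decreasingly minimal vector with a prescribed sorted profile is essentially unique up to the canonical partition structure.

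\textbf{Main obstacle.} The hard part is Step 2: making the transfer/exchange argument rigorous using supermodularity, i.e., verifying that the Lorenz-improving transfer from a coordinate with larger $z$-value to one with smaller $z$-value stays inside the core. This requires the exchange property of base polyhedra (for a base $z$ of a supermodular function and coordinates $p, q$ with $z_p > z_q$, there is $\varepsilon > 0$ with $z + \varepsilon(\mathbf{1}_q - \mathbf{1}_p) \in C(N,v)$ unless a tight coalition separates them, in which case one reroutes the argument inside that coalition using the recursive Lorenz-core structure). Handling the tight coalitions correctly — which is exactly where the recursion in the definition of $L(S,v)$ earns its keep — is the delicate bookkeeping; everything else is routine given Theorems~\ref{Theorem:RelationCoreEga} and \ref{Theorem:ExistenceofEgaR}.
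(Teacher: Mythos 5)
First, note that the paper does not prove this statement at all: it is quoted from Dutta--Ray \cite{Dutta-Ray1} with a citation, and the only proof-adjacent material in the paper is the decomposition algorithm of Section \ref{Sub2:decompositonalgorithm}, which is the engine of Dutta--Ray's original argument (one shows that the partial sums $\sum_{j\le i}(x^*{\downarrow})_j$ of the algorithm's output are bounded above by the corresponding partial sums of any core element, using $x^*(S_1\cup\dots\cup S_l)=v(S_1\cup\dots\cup S_l)$ together with $z(S_1\cup\dots\cup S_l)\ge v(S_1\cup\dots\cup S_l)$ for every core $z$). Your route through base polyhedra and lexicographic optimality is a legitimate alternative and is consistent with Theorem \ref{Theorem:LexEga}, but as written it has one genuine logical flaw and one load-bearing step that is asserted rather than established.

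The flaw is the contradiction mechanism in Step 2: producing $z'\in C(N,v)$ that Lorenz-dominates $z$ contradicts nothing about $x$. Lorenz-maximality of $x$ in $L(N,v)$ only says no element dominates $x$; since Lorenz-domination is a partial order, $x$ and $z$ could in principle be incomparable without violating that. What you actually need is the global statement that the core of a convex game admits a \emph{least majorized} element (the continuous analogue of Theorem \ref{theorem:Mleastmajorized}), which then Lorenz-dominates every non-value-equivalent core element by the equivalence recorded in Section \ref{Sub3:Def}, and that this element is the egalitarian solution (Theorem \ref{Theorem:LexEga}). The gap between ``$x$ admits no improving transfer'' (local unimprovability, i.e.\ the dec-min property) and ``$x$ majorizes every core element'' (global dominance) is precisely the content of the least-majorized-element theorem; your proposal gestures at it (``reproduce that transfer argument or cite the equivalence'') but never isolates or proves it, and iterating transfers only yields some undominated element, not one that dominates all others. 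Your Step 3 is fine once this is in place: a core element value-equivalent to $x$ is itself least majorized, hence lexicographically optimal, hence equal to $x$ by Fujishige's uniqueness theorem. So the proposal is repairable by explicitly invoking (or proving) the existence of a least majorized element in a base polyhedron, but in its current form the decisive step is missing.
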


Theorem \ref{Theorem:EgaLorenzdominance} raises the question 
whether the egalitarian solution Lorenz-dominates 
every other element of the Lorenz core. 
The following example shows that 
this is not true even in convex games.

\begin{example}\label{Example:Lorenzcore}\rm (Dutta--Ray \cite{Dutta-Ray1}, Example 5)
Let $N = \{1,2,3\}$, $v(\{1\}) = 4$，$v(\{2\}) = 6$，
$v(\{3\}) = 8$，$v(\{1,2\}) = 11$，$v(\{1,3\})  =12$，
$v(\{2,3\}) = 15$ and $v(N) = 21$. 
This game is convex. 
The egalitarian solution in this game is $(6,7,8)$, 
which does not Lorenz-dominate $(6.25, 6.5, 8.25) \in L(N,v)$. 
\qed\end{example}

\subsection{The algorithm for egalitarian solution in a convex game}
\label{Sub2:decompositonalgorithm}
We describe an algorithm 
to locate the egalitarian solution 
introduced by  Dutta--Ray \cite{Dutta-Ray1}. 
This algorithm is equivalent to the {\em decomposition algorithm} 
of Fujishige \cite{Fujishige}.

\medskip
Let $v: 2^N \rightarrow \mathbf{R}$ be a supermodular set function. 
Define $v_1 = v$.\par
\noindent
Step 1: Let $S_1$ be the coalition 
that satisfies the following two conditions. \par
\begin{enumerate}
\item $v_1(S_1) / |S_1| \geq v_1(S) / |S| $ 
for all $S \subseteq N$. \par
\item $|S_1| > |S|$ for all $S \neq S_1$ 
such that $v_1(S_1) / |S_1| = v_1(S) / |S|$.
\end{enumerate}
\noindent
That is, 
$S_1$ is the largest coalition having the highest average worth. 
By using supermodularity of $v$, 
we can verify the existence of such an $S_1$. 
Define
\begin{equation*}
x^*_i = \frac {v_1(S_1)}{|S_1|} \quad (i \in S_1).
\end{equation*}
\noindent
Step $k \, (k \geq 2)$: 
Suppose that $(S_1, v_1), \dots, (S_{k-1}, v_{k-1})~ (k \geq 2)$
have been defined 
and $S_1 \cup \dots \cup S_{k-1} \neq N$. 
Define a new game with player set 
$N \setminus ({S_1 \cup \dotsb \cup S_{k-1}})$. 
For all coalitions $S$ of this new player set, 
define $v_k(S)$ by
\begin{equation*}
v_k(S) = v_{k-1}(S_{k-1} \cup S) - v_{k-1}(S_{k-1}).
\end{equation*}
\noindent

By the definition of $v_k$, 
$v_k$ is a supermodular function.
Just as in Step 1, 
define $S_k$ to be the largest coalition in 
$N \setminus (S_1 \cup \dots \cup S_{k-1})$ 
that maximizes $\frac{v_k(S)}{|S|}$
and define 
\begin{equation*}
x^*_i = \frac{v_k(S_k)}{|S_k|} \quad (i \in S_k).
\end{equation*}

In at most $n$ steps, 
we can obtain a partition of $N$ into sets $S_1, \dots, S_m ~ (m \leq n)$ and 
the egalitarian solution. 
By the above construction of $x^*$, 
we obtain the following:
\begin{equation}
x^*_i = x^*_j \quad ( i, j \in S_l, ~ l = 1, \dots, m), \label{eq:uniform}
\end{equation}
\begin{equation}
\sum_{k=1}^l \sum_{j \in S_k} x^*_j = v(S_1 \cup \dots \cup S_l) \quad (l = 1, \dots, m),
\end{equation}
\begin{equation}
x^*_i > x^*_j \quad (i \in S_k,~ j \in S_l, ~ k < l ). \label{eq:different}
\end{equation}
\noindent
The equation (\ref{eq:uniform}) shows that 
for $l = 1, \dots, m$, 
each payoff of the players belonging to $S_l$ is the same.   
%------------------------------------------------------ 
\section{Polymatroid theory and decreasing minimization problem}
\label{Section:PolyandDDM}
In this section, we overview the results of
the polymatroid theory and decreasing minimization problem
from discrete convex analysis.

\subsection{Definition and Notation}\label{Sub3:Def}
First, we give the basic facts about majorization
and decreasing minimization.
A vector $x$ is decreasingly smaller than vector $y$, 
in notation $x <_{\rm dec} y$, 
if $x{\downarrow}$ is lexicographically smaller than $y{\downarrow}$ 
in the sense that they are not value-equivalent and 
$(x{\downarrow})_j < (y{\downarrow})_j$ 
for the smallest subscript $j$ for which $(x{\downarrow})_j$ 
and $(y{\downarrow})_j$ differ. 
We write $x \leq_{\rm dec} y$ to mean 
that $x$ is decreasingly smaller or value-equivalent to $y$. 
For a set $Q$ of vectors, 
$x \in Q$ is {\em decreasingly minimal} 
(dec-min, for short) in $Q$
if $x \leq_{\rm dec} y$ holds for every $y \in Q$. 
\par

The {\em decreasing minimization problem} is 
to find a dec-min element of a given set $Q$ of vectors.
Frank--Murota \cite{FK1,FK2} deal with the case 
where the set $Q$ is an M-convex set, 
which is to be defined in Section \ref{Sub3:Mconvex}. 
\par

Just as the notion of decreasingly minimality, 
we can consider a notion of increasingly maximality. 
A vector $x$ is increasingly larger than vector $y$, 
in notation $x >_{\rm inc} y$, 
if $x{\uparrow}$ is lexicographically larger than $y{\uparrow}$ 
in the sense that they are not value-equivalent 
and $(x{\uparrow})_j > (y{\uparrow})_j$ 
for the smallest subscript $j$ for which $(x{\uparrow})_j$ 
and $(y{\uparrow})_j$ differ. 
We write $x \geq_{\rm inc} y$ 
to mean that $x$ is increasingly larger 
or value-equivalent to $y$.
For a set $Q$ of vectors, 
$x \in Q$ is {\em increasingly maximal} 
(inc-max, for short) in $Q$
if $x \geq_{\rm inc} y$ holds for every $y \in Q$. 
\par

Let $\overline{x}$ denote the vector 
whose $k$-th component $\overline{x}_k$ 
is equal to the sum of the first $k$ components of $x{\downarrow}$. 
A vector $x$ is said to be {\em majorized} by another vector $y$, 
in notation $x \prec y$, 
if $\overline{x} \leq \overline{y}$ 
and $\overline{x}_n = \overline{y}_n$ hold. 
Also, $x$ is said to be {\em strictly majorized} by $y$ 
if $\overline{x} < \overline{y}$ 
and $\overline{x}_n = \overline{y}_n$ hold \cite{MOA11}. 
Let $Q$ be an arbitrary subset of $\mathbf{R}^N$. 
An element $x$ of $Q$ is said to be {\em least majorized} in $Q$ 
if $x$ is majorized by all $y \in Q$. 
\par

There exists a relationship 
between the notion of decreasing minimality and 
that of being least majorized as follows.

\begin{proposition}\label{PropofTamir}{\rm (e.g., Frank--Murota \cite{FK2} and Tamir \cite{Tamir})}
Let $Q$ be an arbitrary subset of $\mathbf{R}^N$ 
and assume that $Q$ admits a least majorized element. 
Then, an element of Q is least majorized in Q 
if and only if 
it is decreasingly minimal in Q.
\qed\end{proposition}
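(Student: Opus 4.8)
The plan is to reduce everything to one elementary observation: majorization is finer than the decreasing (lexicographic) order, i.e.\ $x \prec y$ implies $x \leq_{\rm dec} y$. To prove this I would compare the prefix-sum vectors $\overline{x}$ and $\overline{y}$. If they coincide, then $x{\downarrow} = y{\downarrow}$ (one recovers the sorted vector from its prefix sums by successive differences, with the convention $\overline{x}_0 = 0$), so $x$ and $y$ are value-equivalent and hence $x \leq_{\rm dec} y$. Otherwise, since $\overline{x} \le \overline{y}$ and $\overline{x}_n = \overline{y}_n$, there is a smallest index $k$, necessarily $k < n$, at which $\overline{x}_k < \overline{y}_k$; for $j < k$ one has $\overline{x}_j = \overline{y}_j$, hence $(x{\downarrow})_j = (y{\downarrow})_j$, while $(x{\downarrow})_k = \overline{x}_k - \overline{x}_{k-1} < \overline{y}_k - \overline{y}_{k-1} = (y{\downarrow})_k$. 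Thus $x{\downarrow}$ is lexicographically smaller than $y{\downarrow}$, i.e.\ $x <_{\rm dec} y$.

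Given this, the ``only if'' direction is immediate: if $x$ is least majorized in $Q$, then $x \prec y$ for every $y \in Q$, hence $x \leq_{\rm dec} y$ for every $y \in Q$, so $x$ is decreasingly minimal in $Q$.

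For the ``if'' direction I would invoke the hypothesis that $Q$ admits a least majorized element $x^{*}$; this is exactly where that assumption is genuinely needed, since the decreasing order is a strict refinement of majorization and need not coincide with it. By the first part $x^{*}$ is itself dec-min in $Q$. Now let $x$ be any dec-min element of $Q$. Since both $x$ and $x^{*}$ are dec-min we have $x \leq_{\rm dec} x^{*}$ and $x^{*} \leq_{\rm dec} x$; as $\leq_{\rm dec}$ is a total (hence anti-symmetric) order on sorted vectors, $x$ and $x^{*}$ are value-equivalent, so $\overline{x} = \overline{x^{*}}$. Finally, for every $y \in Q$ we have $x^{*} \prec y$, i.e.\ $\overline{x^{*}} \le \overline{y}$ with $\overline{x^{*}}_n = \overline{y}_n$; substituting $\overline{x} = \overline{x^{*}}$ gives $x \prec y$. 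Hence $x$ is least majorized in $Q$.

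The argument is short, and the only place requiring real care is the first step --- verifying that majorization implies the decreasing order by tracking the first coordinate where the prefix sums separate --- together with the point that the existence hypothesis cannot be dropped in the reverse implication.
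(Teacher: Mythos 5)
Your argument is correct and complete. Note, however, that the paper itself gives no proof of this proposition: it is stated with a citation to Frank--Murota and Tamir and closed with a \qed, so there is no in-paper argument to compare against. What you supply is a valid self-contained proof, and it is essentially the standard one found in the cited sources: the key lemma that $x \prec y$ implies $x \leq_{\rm dec} y$ (by locating the first index where the prefix-sum vectors $\overline{x}$ and $\overline{y}$ separate and reading off $(x{\downarrow})_k = \overline{x}_k - \overline{x}_{k-1}$) is exactly the right reduction, and your handling of the converse --- taking the guaranteed least majorized element $x^{*}$, showing any dec-min $x$ must be value-equivalent to it because $\leq_{\rm dec}$ is antisymmetric up to value-equivalence, and then transferring $\overline{x^{*}} \leq \overline{y}$ to $\overline{x}$ --- is precisely where the existence hypothesis is needed, as you correctly point out. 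The only cosmetic quibble is the phrase ``total (hence anti-symmetric) order'': $\leq_{\rm dec}$ is a total preorder on vectors, and what you actually use (and correctly establish) is that mutual comparability forces value-equivalence; this does not affect the validity of the proof.
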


Also, there exists a relationship 
between being majorized and Lorenz-domination, 
that is, 
\par
\begin{center}
$x$ Lorenz-dominates $y$ $\Leftrightarrow$ $x$ is strictly majorized by $y$.
\end{center}
\noindent

Note that if we replace ``strictly majorized'' with ``majorized'' 
in the above, 
then $\Rightarrow$ is true but $\Leftarrow$ is not true. 
Indeed, if $\overline{x}_j = \overline{y}_j$ for all $j = 1, \dots, n$, 
then $x$ is majorized by $y$ but $x$ does not Lorenz-dominate $y$. 
However, if we identify value-equivalent vectors, 
the notion of Lorenz-domination is equivalent to that of being majorized. 
\par

\subsection{Polymatroid theory and decreasing minimization on an M-convex set}
\label{Sub3:Mconvex}
In polymatroid theory, 
the concept of base polyhedron plays a central role.
A base polyhedron is defined as follows.

\begin{definition}\label{DefofBasepoly}
\rm
For a finite-valued supermodular set function $g$ on $N$
with $g(\emptyset) = 0$,
the  associated {\em base polyhedron} $B(g)$ is defined by
\[
\pushQED{\qed}
B(g) = \{x \in \mathbf{R}^N \mid x(N) = g(N), x(S) \geq g(S) ~ (\forall
S \subsetneq N)\}.  \qedhere
\popQED
\]
\end{definition}

If $g$ is an integer-valued supermodular set function, 
we call $B(g)$ an integral base polyhedron. 
Any extreme point of an integral base polyhedron is an integer point and 
the convex hull of the integer points of $B(g)$ 
coincides with $B(g)$ itself.

Lexicographically optimal base is defined as follows \cite{Fujishige}: 

\begin{definition} \label{DefofLexco}
\rm 
For a supermodular function $g \colon 2^N \rightarrow \mathbf{R}$, 
$x \in B(g)$ is said to be a {\em lexicographically optimal base} of $B(g)$ 
if $x \geq_{\rm inc} y$ holds for any $y \in B(g)$.
\qed\end{definition}

\begin{remark}\label{remark:FujiLexmax}
\rm
Fujishige \cite{Fujishige} deals with the weighted lexicographically optimal base. 
We mainly treat the unweighted lexicographically optimal base in this paper.
\qed\end{remark}

Next, we define an M-concex set, 
which plays a central role in discrete convex analysis.
Here for two vectors $x, y \in \mathbf{R}^N$, 
we define 
\begin{align*}
&{\rm supp}^+(x-y) = \{i \in N \mid x_i > y_i\},  \\
&{\rm supp}^-(x-y) = \{j \in N \mid x_j < y_j\}  
\end{align*}
\noindent
and for each $i \in N$, let characteristic vector $\chi_i \in \{0, 1\}^N$ denote by
\begin{align*}
\chi_i(j) = \begin{cases}
1 & (i = j),  \\
0& (i \neq j). 
\end{cases}
\end{align*}
\noindent
Then, the concept of M-convex set is defined as follows.

\begin{definition} \label{DefofMconcexset} \rm (M-convex set, Murota \cite{MdcaSteinitz's,Mdca98,Mdcasiam})
We say that the set $B \subseteq \mathbf{Z}^N$ is an {\em M-convex set}
if for any $x, y \in B$ 
and for any $i \in {\rm supp^+}(x-y)$, 
there exists some $j \in {\rm supp^-}(x-y)$:
\[
\pushQED{\qed}
x - \chi_i + \chi_j \in B,  \, y + \chi_i - \chi_j\in B. \qedhere
\popQED
\]
\end{definition}
A set $B \subseteq \mathbf{Z}^N$ 
is an M-convex set 
if and only if $B = B(g) \cap \mathbf{Z}^N$ 
holds for some integer-valued supermodular function $g$. 
That is, 
an M-convex set is nothing but 
the set of integral points of an integral base polyhedron.
\par

Decreasingly minimal elements on an M-convex set 
can be regarded as a discrete counterpart of the lexicographically optimal base.
We note that for any M-convex set $B$, 
an element is decreasingly minimal in $B$ 
if and only if 
it is increasingly maximal in $B$ (cf., \cite{FK2,Tamir}).
Frank--Murota \cite{FK1,FK2} mainly consider the problem of
finding a dec-min element of an M-convex set.
\par

M-convex set is characterized by the exchange axiom 
of Definition \ref{DefofMconcexset}.
Similarly, a dec-min element of an M-convex set is 
characterized by certain exchange operations as follows.

\begin{definition}\label{Defof1tight}\rm (1-tightening step \cite{FK1,FK2}) 
Let $B \subseteq \mathbf{Z}^N$ be an M-convex set.
A {\em 1-tightening step} for $x \in B$ means 
the operation of replacing $x$ to $x + \chi_i - \chi_j$ 
for some $i, j \in N$ such that 
$x_j \geq x_i + 2$ and $x + \chi_i - \chi_j \in B$.
\qed\end{definition}

\begin{theorem}\label{1tighteningDecminTheorem}
{\rm (Frank--Murota \cite{FK1}, Theorem 3.3)} 
Let $B$ be an M-convex set.
For an element $x$ of $B$, 
the following equivalence holds:
\par
\begin{center}
There is no 1-tightening step for $x$ $\Leftrightarrow$
$x$ is decreasingly minimal in $B$. 
\qed\end{center}
\end{theorem}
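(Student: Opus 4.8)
The plan is to establish the two implications separately. The forward direction --- a $1$-tightening step precludes decreasing minimality --- is the easy one, while the reverse direction --- the absence of any $1$-tightening step forces decreasing minimality --- is the substantive half and the place where the M-convex exchange structure (Definition \ref{DefofMconcexset}) does the work.

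For the easy direction, suppose $x' = x + \chi_i - \chi_j \in B$ arises from $x$ by a $1$-tightening step, so that $x_j \geq x_i + 2$. Then $x'$ is obtained from $x$ by transferring one unit from the larger coordinate $j$ to the smaller coordinate $i$, and since the gap is at least $2$ we have $x'_j = x_j - 1 \geq x_i + 1 = x'_i$; thus the transfer does not reverse the two values, i.e.\ it is a Pigou--Dalton (Robin Hood) transfer, so $x'$ is majorized by $x$. Because $x_j \geq x_i + 2$, the vectors $x$ and $x'$ are not value-equivalent, hence $x'$ is strictly majorized by $x$; by the relationship between strict majorization and Lorenz-domination recorded in Section \ref{Sub3:Def} (so that the partial sums satisfy $\overline{x'} \leq \overline{x}$ with a strict inequality somewhere), this gives $x' <_{\rm dec} x$, and therefore $x$ is not decreasingly minimal in $B$.

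For the reverse direction I would argue the contrapositive: assuming $x$ is not decreasingly minimal, I produce a $1$-tightening step. Since $B = B(g) \cap \mathbf{Z}^N$ is a bounded set of integer points, hence finite and nonempty, it contains a decreasingly minimal element $z$; as $x$ is not dec-min it is not value-equivalent to $z$, so $z <_{\rm dec} x$ strictly. Let $t$ be the first position at which $x{\downarrow}$ and $z{\downarrow}$ differ, put $\beta = (x{\downarrow})_t$ and $A = \{\, k \in N \mid x_k \geq \beta \,\}$; then $|A| \geq t$, and comparing the two sorted vectors coordinatewise gives $x(A) = \sum_{s=1}^{|A|} (x{\downarrow})_s > \sum_{s=1}^{|A|} (z{\downarrow})_s \geq z(A)$. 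In particular $\emptyset \neq A \subsetneq N$, the constraint $z(A) \geq g(A)$ shows that $A$ is slack at $x$, and some $i \in A$ lies in ${\rm supp}^+(x - z)$. Applying the exchange axiom of Definition \ref{DefofMconcexset} to $x$, $z$ and such an $i$ produces $j \in {\rm supp}^-(x - z)$ with both $x - \chi_i + \chi_j \in B$ and $z + \chi_i - \chi_j \in B$; the goal is to argue $x_i - x_j \geq 2$, for then $x + \chi_j - \chi_i$ is a $1$-tightening step, contradicting the hypothesis. The inequality $x_i - x_j \geq 2$ is meant to follow by combining $x_i \geq \beta$ (as $i \in A$), $x_j < z_j$ (as $j \in {\rm supp}^-(x-z)$), and the dec-minimality of $z$, which forbids the equalizing move $z + \chi_i - \chi_j$ from strictly decreasing $z$ and thereby pins down $z_j$.

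The step I expect to be the main obstacle is precisely this last argument: a priori the exchange axiom may only return a coordinate $j$ with $x_i - x_j = 1$, in which case $x - \chi_i + \chi_j$ is merely a value-equivalent transposition rather than a genuine tightening step. To rule this out one must choose $z$ extremally (say, minimizing $\|x - z\|_1$ among all dec-min elements) and then show that in the degenerate case $x$ and $z$ differ by exactly $\pm 1$ on ${\rm supp}^+(x-z) \cup {\rm supp}^-(x-z)$, with a matching pairing these coordinates at consecutive $z$-values; a counting argument on the level sets $\{\, k \mid x_k \geq \ell \,\}$ --- the discrete counterpart of the principal partition of Fujishige \cite{Fujishige}, namely the canonical partition of Frank--Murota --- then exhibits a pair $i^* \in {\rm supp}^+(x-z)$, $j^* \in {\rm supp}^-(x-z)$ with $z_{i^*} \geq z_{j^*}$, hence $x_{i^*} \geq x_{j^*} + 2$, which one feeds back through the exchange axiom to obtain an admissible $1$-tightening step. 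This degenerate case is essentially the heart of Frank--Murota \cite{FK1}, Theorem~3.3, and I would either reproduce their level-set bookkeeping or invoke that result directly.
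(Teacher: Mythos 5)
The first thing to note is that the paper itself offers no proof of this statement: it is imported verbatim as Frank--Murota [FK1, Theorem~3.3], so there is no internal argument to compare yours against. Judged on its own terms, your forward implication is complete and correct: a $1$-tightening step with gap at least $2$ is a Pigou--Dalton transfer that changes the value multiset, hence produces an element of $B$ strictly majorized by $x$, hence decreasingly smaller than $x$. The setup of your reverse implication (finiteness of $B$, existence of a dec-min $z$ with $z <_{\rm dec} x$, the set $A$, and the application of the exchange axiom to some $i \in {\rm supp}^+(x-z)$) is also sound.

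The genuine gap is exactly where you flag it: the degenerate case $x_i - x_j = 1$. Your proposed repair does not work as described, for two reasons. First, the exchange axiom lets you choose $i$ but hands you an arbitrary $j \in {\rm supp}^-(x-z)$; exhibiting a favourable pair $(i^*, j^*)$ with $x_{i^*} \geq x_{j^*} + 2$ does not give you $x - \chi_{i^*} + \chi_{j^*} \in B$, so there is nothing to ``feed back through'' the axiom. Second, the existence of a pair with $z_{i^*} \geq z_{j^*}$ is not automatic from $z <_{\rm dec} x$ alone (e.g.\ $x = (6,0,2)$, $z = (0,5,3)$ has $z <_{\rm dec} x$ yet $z_i < z_j$ for the unique $i \in {\rm supp}^+(x-z)$ and every $j \in {\rm supp}^-(x-z)$), so the claimed level-set counting would need the full $\pm 1$ structure you only gesture at; and the fallback of ``invoking the result directly'' is circular. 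The good news is that the extremal choice you already made closes the case with no bookkeeping at all: take $z$ dec-min minimizing $\|x - z\|_1$. In the degenerate case the chain $x_i \leq x_j + 1 \leq z_j \leq z_i + 1 \leq x_i$ (the third inequality because $z$, being dec-min, admits no $1$-tightening step $z + \chi_i - \chi_j$) forces equalities throughout, so $x_i = z_i + 1 = z_j$ and $x_j = z_j - 1 = z_i$. Hence $z' = z + \chi_i - \chi_j \in B$ merely swaps the two consecutive values $z_i$ and $z_j = z_i+1$, so $z'$ is value-equivalent to $z$ and therefore still dec-min, while $\|x - z'\|_1 = \|x - z\|_1 - 2$, contradicting the choice of $z$. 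With that paragraph in place your argument becomes a complete and self-contained proof.
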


\begin{remark}\label{Remarkof1tight}
\rm
A 1-tightening step is called the {\em Robin Hood transfer} 
or {\em Robin Hood operation} 
in economics and the theory of majorization 
(see also Arnold \cite{Arnold1987} and Marshall et al. \cite{MOA11}). 
Also, a 1-tightening step is called the {\em progressive transfer} 
or {\em rich to poor transfer} in Dutta--Ray \cite{Dutta-Ray1}. 
Note that they do not restrict $B$ to an M-convex set 
in Definition \ref{Defof1tight}.
\qed\end{remark}

Finally, we explain the relationship between 
an M-convex set and a least majorized element.
The following theorem shows 
the existence of a least majorized element in an M-convex set.

\begin{theorem}\label{theorem:Mleastmajorized}{\rm (e.g., \cite{FK2,Tamir})} 
An M-convex set admits a least majorized element.
\qed\end{theorem}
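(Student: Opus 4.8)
The plan is to produce an explicit least majorized element of $B$ by minimizing a strictly convex separable potential over $B$, and then to verify by a descent argument that this element is majorized by \emph{every} element of $B$, not merely decreasingly minimal. First I would record that $B$ is a nonempty finite set: writing $B = B(g)\cap\mathbf{Z}^N$ for an integer-valued supermodular $g$ with $g(\emptyset)=0$, every $x\in B$ satisfies $g(\{i\})\le x_i\le g(N)-g(N\setminus\{i\})$ for each $i\in N$, so $B(g)$ is bounded (and nonempty, as is standard for base polyhedra), hence $B$ is a nonempty finite set. Therefore $\sum_{i\in N}x_i^2$ attains its minimum over $B$; fix a minimizer $x^{*}$. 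A one-line computation shows that every $1$-tightening step strictly decreases this potential: replacing $x$ by $x+\chi_i-\chi_j$ with $x_j\ge x_i+2$ changes $\sum_k x_k^2$ by $2(x_i-x_j)+2\le -2$. Hence $x^{*}$ admits no $1$-tightening step and is decreasingly minimal in $B$ by Theorem \ref{1tighteningDecminTheorem}. It remains to upgrade ``minimizer of the potential'' to ``least majorized''.

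Suppose this fails: there is $y\in B$ with $x^{*}\not\prec y$. Since every element of $B$ has coordinate sum $g(N)$, we have $\overline{x^{*}}_n=\overline{y}_n$, so $x^{*}\not\prec y$ forces a smallest index $k\le n-1$ with $\overline{x^{*}}_k>\overline{y}_k$. Now perform an auxiliary minimization: among all elements of $B$ value-equivalent to $x^{*}$ — each of which is again a dec-min minimizer of $\sum_k x_k^2$ and still fails $\prec y$ — choose one with $\|x^{*}-y\|_1$ minimal, and continue to denote it $x^{*}$. Let $S$ be a $k$-element set carrying the $k$ largest coordinates of $x^{*}$, so that $\sum_{i\in S}x^{*}_i=\overline{x^{*}}_k$; since $\sum_{i\in S}y_i\le\overline{y}_k<\overline{x^{*}}_k$, we get $\sum_{i\in S}(x^{*}_i-y_i)>0$, so some $i_0\in S$ lies in ${\rm supp}^+(x^{*}-y)$. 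Applying the exchange axiom of Definition \ref{DefofMconcexset} to $x^{*},y$ at $i_0$ produces $j_0\in{\rm supp}^-(x^{*}-y)$ with $x^{*}-\chi_{i_0}+\chi_{j_0}\in B$.

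Finally I would split on the comparison of $x^{*}_{j_0}$ and $x^{*}_{i_0}$. If $x^{*}_{j_0}\ge x^{*}_{i_0}$, the exchange changes $\sum_k x_k^2$ by $2(x^{*}_{j_0}-x^{*}_{i_0})+2\ge 2>0$, contradicting minimality of the potential; if $x^{*}_{j_0}\le x^{*}_{i_0}-2$, the exchange is a $1$-tightening step (and decreases the potential), contradicting Theorem \ref{1tighteningDecminTheorem}. The only surviving case is $x^{*}_{j_0}=x^{*}_{i_0}-1$, where the exchange is value-equivalent to $x^{*}$; yet because $i_0\in{\rm supp}^+(x^{*}-y)$ and $j_0\in{\rm supp}^-(x^{*}-y)$ it strictly decreases $\|x^{*}-y\|_1$, contradicting the auxiliary minimization. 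Hence no such $y$ exists and $x^{*}$ is least majorized.

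I expect the routine parts to be the boundedness estimate and the sum-of-squares arithmetic. The genuine obstacle is the middle step: setting up the correct two-level minimization — first the separable convex potential, to pin down the value-equivalence class of a dec-min element, and then the $\ell_1$-distance to $y$, to pin down the representative — so that the value-equivalent exchange case cannot recur and the descent terminates. If a shorter treatment is preferred one may instead simply invoke \cite{FK2,Tamir}, since the statement is not new here.
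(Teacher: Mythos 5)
The paper does not actually prove this statement---it is quoted from \cite{FK2,Tamir} with a citation---so your attempt has to stand on its own. Your overall plan is reasonable and close in spirit to how the cited works proceed: minimize the square-sum over the finite nonempty set $B$, note that a $1$-tightening step strictly decreases it so the minimizer $x^{*}$ is dec-min by Theorem~\ref{1tighteningDecminTheorem}, then try to upgrade to ``least majorized'' via the exchange axiom with a secondary $\ell_1$ tie-break. The boundedness estimate, the square-sum arithmetic, and your cases $x^{*}_{j_0}\le x^{*}_{i_0}-2$ and $x^{*}_{j_0}=x^{*}_{i_0}-1$ are all correct.

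The gap is in the first case of your final trichotomy, and it is exactly the case that carries the content of the theorem. When $x^{*}_{j_0}\ge x^{*}_{i_0}$, the exchange $x^{*}-\chi_{i_0}+\chi_{j_0}$ \emph{increases} the potential by $2(x^{*}_{j_0}-x^{*}_{i_0})+2$, and you declare this to ``contradict minimality of the potential.'' It does not: a minimizer is entirely consistent with every feasible exchange having larger potential---that is what being a minimizer means. Nothing in your setup excludes this case either: $i_0$ is just some element of $S\cap{\rm supp}^{+}(x^{*}-y)$, and the exchange axiom hands you an uncontrolled $j_0\in{\rm supp}^{-}(x^{*}-y)$, which may satisfy $x^{*}_{j_0}\ge x^{*}_{i_0}$ (for instance $j_0$ may itself lie in $S$, or $y_{j_0}$ may exceed an already large $x^{*}_{j_0}$). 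So in this case you derive no contradiction and the descent stalls. This is not a cosmetic slip: the entire difficulty in passing from ``dec-min / square-sum-minimal'' to ``least majorized'' is to establish $\overline{x^{*}}_k\le\overline{y}_k$ even though a single exchange toward $y$ can move a unit from a poorer coordinate to a richer one; your cases (b) and (c) are the easy ones. A complete argument would either read the inequality $\overline{x^{*}}_k\le\overline{y}_k$ off the canonical partition (Theorem~\ref{Theorem:structureofDecmin}, using that $y(C_k)\ge g(C_k)=x^{*}(C_k)$ and the near-uniformity (\ref{eq:near-uniform})), or show that the square-sum minimizer simultaneously minimizes $\sum_i\phi(x_i)$ for every convex $\phi$ and invoke the Hardy--Littlewood--P\'olya characterization of majorization; either route needs work you have not supplied. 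Falling back on the citation to \cite{FK2,Tamir}, as your last sentence suggests and as the paper itself does, is the correct course here.
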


\begin{remark}
\rm
According to Frank--Murota \cite{FK2}, 
the above fact has long been recognized by experts at least since 1995, 
though it was difficult to identify its origin in the literature. 
\qed\end{remark}

\subsection{Structure of dec-min elements on an M-convex set}\label{Subs:StructureofDecmin}
In this section, 
we introduce a partition and a chain called {\em canonical partition} 
and {\em canonical chain} respectively 
that describe the structure of dec-min elements on an M-convex set. 
They are introduced by Frank--Murota \cite{FK1,FK2} 
and the canonical partition is a discrete counterpart of the principal partition
considered by Fujishige \cite{Fujishige}
for the lexicographically optimal base
in continuous variables. 
\par

The canonical chain and the canonical partition of an M-convex set 
are constructed as follows \cite{FK1,FK2}. 
Let $g: 2^N \rightarrow \mathbf{Z}$ 
be an integer-valued supermodular function 
with $g(\emptyset) = 0$ and $g(N) > -\infty$. 
Consider the smallest maximizer $L(\beta)$ of $g(X) - \beta|X|$ 
for all integers $\beta$. 
There are finitely many $\beta$ for which $L(\beta) \neq L(\beta - 1)$.
Denote such integers as 
$\beta_1 > \beta_2 > \dots > \beta_q$ 
and call the {\em essential value-sequence}. 
Furthermore, 
define $C_k = L(\beta_k -1)$ for $k = 1, 2, \dots, q$ 
to obtain a chain: 
$C_1 \subsetneq C_2 \subsetneq C_2 \subsetneq \dots \subsetneq C_q$. 
Call this the {\em canonical chain}. 
Finally define a partition $\{S_1, S_2, \dots, S_q\}$ of $N$ 
by $S_k = C_k \setminus C_{k-1}$ for $k = 1, 2, \dots, q$, 
where $C_0 = \emptyset$, 
and call this the {\em canonical partition}. 
\par 

Alternatively, the canonical chain and the canonical partition 
can be defined iteratively as follows. 
For $k = 1, 2, \dots, q$, 
define 
\begin{align*}
&\beta_k = \max\biggl\{\biggl\lceil \frac{g(X \cup C_{k-1}) - g(C_{k-1})}{|X|}\biggr\rceil \mid \emptyset \neq X \subseteq \overline{C_{k-1}}\biggr\},   \\
&h_k(X) = g(X \cup C_{k-1}) - (\beta_k - 1)|X| - g(C_{k-1}) \quad (X \subseteq \overline{C_{k-1}}),   \\
&S_k = {\rm smallest ~ subset ~ of} ~ \overline{C_{k-1}}~ {\rm maximizing} ~ h_k, \\
&C_k = C_{k-1} \cup S_k, 
\end{align*}
\noindent
where $\overline{C_{k-1}} = N \setminus C_{k-1}$. 
\par
%}%%% END OMIT

Then, this chain enables us to construct 
the set of dec-min elements on an M-convex set as follows.
We define the supermodular function 
$g'_k:2^{\overline{C_k}} \rightarrow \mathbf{Z}$ 
\begin{align*}
g'_k(X) = g(X \cup C_k) - g(C_k) \quad (X \subseteq \overline{C_k}) 
\end{align*}
\noindent
which defines the M-convex set $B'_k = B'(g'_k)$ in $\mathbf{R}^{\overline{C_k}}$.

Moreover, we denote the restriction of $g'_k$ to $S_k$ by $g_k$ 
which defines the M-convex set $B_k = B'(g_k) \subseteq \mathbf{R}^{S_k}$
for each $k = 1, \dots, q$.
Let $B^{\oplus}$ denote the face of $B(g)$ 
defined by the canonical chain 
$\mathcal{C}^* = \{C_1, \dots, C_q\}$, 
that is, $B^{\oplus}$ is the direct sum of the M-convex sets 
$B_k ~ (k = 1, \dots, q)$.
We note that $B^{\oplus}$ is an M-convex set
because the direct sum of M-convex sets is an M-convex set 
(cf., \cite{Mdcasiam}). 
\par

We define  $T_k ~ (k = 1, \dots, q)$ 
and $T^{\oplus}$, which is the direct sum of $T_k ~ (k = 1, \dots, q)$, 
by using the essential value-sequence as follows:
\begin{align*}
&T_k = \{x \in \mathbf{Z}^{S_k} \mid \beta_k - 1 \leq x_i \leq \beta_k ~ (i \in S_k)\},\\
&T^{\oplus} = \{x \in \mathbf{Z}^{N} \mid \beta_k -1 \leq x_i \leq \beta_k ~ (i \in S_k), k = 1,\dots, q\}.
\end{align*}
\noindent
The intersection of an M-convex set with an integral box 
is always an M-convex set,
and hence $B^{\oplus} \cap T^{\oplus}$ is an M-convex set.
The following theorem shows 
that the set of dec-min elements of an M-convex set forms an M-convex set
and is characterized by the canonical partition.

\begin{theorem}\label{Theorem:structureofDecmin}
{\rm (Frank--Murota \cite{FK1}, Theorem 5.1)} 
The set of decreasingly minimal elements of $B(g)$ is 
$B^{\oplus} \cap T^{\oplus}$.
That is, 
an element $x \in B(g)$ is decreasingly minimal in $B(g)$
if and only if 
$x_{S_k} \in B_k \cap T_k$ holds for each $k = 1, \dots, q$.
\qed\end{theorem}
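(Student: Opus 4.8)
The plan is to prove the two inclusions ``$\supseteq$'' and ``$\subseteq$'' between $B^{\oplus}\cap T^{\oplus}$ and the set of decreasingly minimal elements of $B(g)$ separately, using the $1$-tightening characterization (Theorem~\ref{1tighteningDecminTheorem}) for the first and a value-counting estimate for the second. Before that I would record two preliminary facts. First, $B^{\oplus}\cap T^{\oplus}\ne\emptyset$: from the construction of the canonical partition one has $(\beta_k-1)|S_k|<g_k(S_k)\le\beta_k|S_k|$ and, more generally, $h_k(T)\le h_k(S_k)$ and $g_k(T)\le\beta_k|T|$ for all $T\subseteq S_k$ (because $\beta_k$ is the maximum of the ceilings $\lceil g'_{k-1}(X)/|X|\rceil$ over $\emptyset\ne X\subseteq\overline{C_{k-1}}$ and $S_k$ maximizes $h_k$), and these inequalities are exactly what is needed for the base polyhedron $B_k$ to meet the box $T_k$; taking the direct sum over $k$ yields a point $x^{*}\in B^{\oplus}\cap T^{\oplus}$. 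Second, I would note the structure of such an $x^{*}$: since $x^{*}\in B^{\oplus}$, which is the face of $B(g)$ cut out by the canonical chain, it satisfies $x^{*}(C_k)=g(C_k)$ for every $k$, and since $x^{*}_{S_k}\in B_k\cap T_k$ it has exactly $r_k:=g_k(S_k)-(\beta_k-1)|S_k|$ coordinates equal to $\beta_k$ on $S_k$ and the rest equal to $\beta_k-1$; consequently $x^{*}$ has exactly $p_k:=|C_{k-1}|+r_k$ coordinates of value $\ge\beta_k$, these are its $p_k$ largest coordinates, and their sum is $\Sigma_k:=g(C_{k-1})+\beta_kr_k$.

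For ``$\supseteq$'', take $x\in B^{\oplus}\cap T^{\oplus}$; then $x\in B(g)$, $x(C_k)=g(C_k)$ for all $k$, and within each block $S_k$ the coordinates of $x$ differ by at most $1$. Suppose a $1$-tightening step $x\mapsto x+\chi_i-\chi_j$ existed, with $x_j\ge x_i+2$ and $x+\chi_i-\chi_j\in B(g)$. Then $i$ and $j$ lie in different blocks, say $i\in S_k$ and $j\in S_l$, and $x_j\ge x_i+2$ forces $\beta_l\ge\beta_k+1$, hence $l<k$ because the essential value-sequence is strictly decreasing; therefore $j\in C_l$ while $i\notin C_l$, so $(x+\chi_i-\chi_j)(C_l)=g(C_l)-1<g(C_l)$, contradicting membership in $B(g)$. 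Thus no $1$-tightening step exists and $x$ is decreasingly minimal by Theorem~\ref{1tighteningDecminTheorem}.

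For ``$\subseteq$'', let $x\in B(g)$ be decreasingly minimal. Since $x^{*}$ is decreasingly minimal by the previous step, $x\le_{\rm dec}x^{*}$ and $x^{*}\le_{\rm dec}x$ force $x$ and $x^{*}$ to be value-equivalent; in particular $x$ also has exactly $p_k$ coordinates of value $\ge\beta_k$, these are its $p_k$ largest, and they sum to $\Sigma_k$. Now I would bound $x(C_k)$ from above: writing $a$ for the number of $i\in C_k$ with $x_i\ge\beta_k$ (so $a\le p_k$), these $a$ coordinates sum to at most $\Sigma_k-(p_k-a)\beta_k$ and the remaining $|C_k|-a$ coordinates of $C_k$ sum to at most $(\beta_k-1)(|C_k|-a)$; substituting $|C_k|-p_k=\beta_k|S_k|-g_k(S_k)$ and the value of $\Sigma_k$, the arithmetic collapses to $x(C_k)\le g(C_k)$. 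Combined with $x(C_k)\ge g(C_k)$ (from $x\in B(g)$) this gives $x(C_k)=g(C_k)$ for all $k$, and equality in the bound forces $a=p_k$ together with all remaining coordinates of $C_k$ being equal to $\beta_k-1$. Hence $x(S_k)=g_k(S_k)$, and from $x(T\cup C_{k-1})\ge g(T\cup C_{k-1})$ with $x(C_{k-1})=g(C_{k-1})$ one gets $x(T)\ge g_k(T)$ for $T\subseteq S_k$, so $x_{S_k}\in B_k$; moreover the equality case shows $x$ and $x^{*}$ have the same multiset of coordinate values on each $C_k$, hence (by subtracting the $C_{k-1}$-multiset) on each $S_k$, so $x_{S_k}$ takes values in $\{\beta_k-1,\beta_k\}$, i.e.\ $x_{S_k}\in T_k$. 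Therefore $x_{S_k}\in B_k\cap T_k$ for every $k$, which completes the argument.

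The main obstacle is the ``$\subseteq$'' direction, and within it the value-counting inequality $x(C_k)\le g(C_k)$: although $x$ is known only to be value-equivalent to $x^{*}$ and not equal to it, one must verify that the concentration of the top $p_k$ coordinates together with the box bound $\beta_k$ and the identity $|C_k|-p_k=\beta_k|S_k|-g_k(S_k)$ make this upper bound coincide exactly with the base-polyhedron lower bound $g(C_k)$; this exact cancellation, which is driven by $\beta_k$ being the ceiling of the relevant average, is the heart of the proof. A secondary technical point is the rigorous verification that $B_k$ meets the box $T_k$ (hence $B^{\oplus}\cap T^{\oplus}\ne\emptyset$), which rests on the extremal characterization of $S_k$ as a maximizer of $h_k$.
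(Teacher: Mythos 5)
The paper does not actually prove this statement: it is imported verbatim from Frank--Murota \cite{FK1} (their Theorem~5.1), so there is no in-paper proof to compare against, and your argument has to be judged on its own. It is correct and essentially self-contained. The route is a natural ``exhibit-and-compare'' one: produce a single element $x^{*}\in B^{\oplus}\cap T^{\oplus}$, show via the $1$-tightening criterion (Theorem~\ref{1tighteningDecminTheorem}) that every element of $B^{\oplus}\cap T^{\oplus}$ is dec-min, and then exploit the fact that any two dec-min elements are value-equivalent to transfer the known value profile of $x^{*}$ to an arbitrary dec-min $x$. I checked the key cancellation: $x(C_k)\le \Sigma_k-(p_k-a)\beta_k+(\beta_k-1)(|C_k|-a)$ is increasing in $a$, and at $a=p_k$ it collapses exactly to $g(C_k)$ after substituting $\Sigma_k=g(C_{k-1})+\beta_k r_k$ and $|C_k|-p_k=\beta_k|S_k|-g_k(S_k)$; the equality analysis then correctly delivers both $x_{S_k}\in B_k$ and $x_{S_k}\in T_k$. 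This is leaner than the development in \cite{FK1}, which builds the structure of dec-min elements block by block (minimality of the maximum value $\beta_1$, value-fixed sets, and so on); what your version buys is brevity, at the cost of invoking as a black box the nonemptiness criterion for the intersection of an integral base polyhedron with an integral box --- you correctly identify the two inequalities $g_k(T)\le\beta_k|T|$ and $h_k(T)\le h_k(S_k)$ as exactly the required conditions, but the criterion itself (a Frank-type discrete sandwich/intersection theorem) should be cited, e.g.\ to Fujishige \cite{FujishigeSubm}. Two small points to make explicit in a write-up: the ``same multiset of values on each $S_k$'' step is an induction on $k$ (you need the conclusion for $C_{k-1}$ before subtracting), and in the ``$\supseteq$'' direction the contradiction $(x+\chi_i-\chi_j)(C_l)<g(C_l)$ uses $C_l\subsetneq N$, which holds because $l<k\le q$.
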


This theorem also implies that 
for every dec-min element $x$ of $B(g)$, 
we have
\begin{align}
x_i \in \{\beta_k-1,\beta_k\} \quad (i \in S_k). \label{eq:near-uniform}
\end{align}

%------------------------------------------------------ 
\section{Egalitarian solution in the discrete case}
\label{Section:EgainZ}
In this section, 
we investigate the properties of the egalitarian solution 
in Case $\mathbf{Z}$ 
in comparison with the case $\mathbf{R}$. 

%% begin subsection
\subsection{Preliminaries on the egalitarian solution in the discrete case}
\label{Sub4:DefofEgaZ}
We first define a game with discrete side payment. 
A {\em game with discrete side payment} will mean a game $(N,v)$, 
where the characteristic function $v$ is integer-valued 
and payoffs of players are integral vectors. 
We call a game with discrete side payment 
a {\em discrete game} for short
and denote the set of discrete games 
by $\Gamma_{\mathbf{Z}}$. 
Also, we say that a discrete game is 
a {\em discrete convex game} 
when its characteristic function is supermodular. 
We denote the set of discrete convex games 
by $\Gamma^{\rm c}_{\mathbf{Z}}$. 
For clarity, 
we denote the set of games in continuous variables 
and the set of convex games in continuous variables 
by $\Gamma_{\mathbf{R}}$ and $\Gamma^{\rm c}_{\mathbf{R}}$, 
respectively.
\par

We define the egalitarian solution in discrete games 
by simply replacing $\mathbf{R}$ with $\mathbf{Z}$ 
in the definitions of Section \ref{Section:EgainR}. 
Specifically, 
it is defined as follows.
For a discrete game $(N,v) \in \Gamma_{\mathbf{Z}}$, 
the Lorenz core of a singleton coalition is 
$L(\{i\}) = \{v(\{i\})\} ~ (i \in N)$. 
We note that since $v$ is integer-valued, 
$L(\{i\})$ is a set of an integral vector for each $i \in N$.
Now suppose that the Lorenz cores 
for all coalitions of cardinality $k-1$ or less have been defined, 
where $2 \leq k < n$. 
The Lorenz core of coalitions $S$ of size $k$ is defined by
\begin{align*}
L(S,v) = \{x \in \mathbf{Z}^S \mid x(S) = v(S)~\mbox{and there is no} ~ T \subsetneq S~\mbox{and} \\
y \in E(L(T,v))~\mbox{such that~ }x_T < y \}.
\end{align*}
\noindent
By the definition, 
$L(S,v)$ is composed of integral vectors for each $S \subseteq N$.
\par

Then, 
analogous to the Case $\mathbf{R}$,
we call an element of $E(L(N,v))$ 
an {\em egalitarian solution}
for the discrete game $(N,v) \in \Gamma_{\mathbf{Z}}$,
where $E(L(N,v))$ is a subset of the Lorenz core $L(N,v)$
that are not Lorenz-dominated 
by any other element of the Lorenz core. 
\par

The main properties of the egalitarian solution in Case $\mathbf{R}$ 
are as follows:

\begin{enumerate}[Property 1]
\item 
There is at most one egalitarian solution in any game. (Uniqueness)

\item 
In convex games, there exists an egalitarian solution and it is in the core.

\item 
In convex games, the egalitarian solution 
Lorenz-dominates every other element of the core.
\end{enumerate}

We will investigate 
whether the egalitarian solution in the discrete cases
has these properties. 
In this section, we show the following: 
 
\begin{itemize}
\item In discrete games, there may exist multiple egalitarian solutions 
(Example \ref{ExampleofMultipleEga}). 
\item 
In discrete convex games, 
there exists at least one egalitarian solution 
(Theorem \ref{ExistenceofEgalitarian}).
\item 
In discrete convex games, 
there may exist an egalitarian solution outside the core 
(Example \ref{CounterExample}). 
\item 
In discrete convex games, 
every element of egalitarian solutions in the core, 
if any, Lorenz-dominates every element of the core 
that is not an egalitarian solution 
(Theorem \ref{Existenceofleastmajorizedelement}). 
In addition, the egalitarian solutions outside the core 
do not necessarily Lorenz-dominate every element of the core 
that is not an egalitarian solution 
(Example \ref{CounterExample}).
%%\item 
%%The egalitarian solution does not have weak covariance 
%%even in convex games (Example 4.3 and 4.4).
\end{itemize}

\subsection{Relationship between the egalitarian solution and polymatroid theory}
\label{Sub4:EgaandPoly}
In this subsection, we describe the connection 
between the egalitarian solution in convex games 
and the polymatroid theory.
\par
First, it is obvious from the definitions that 
if $(N,v)$ is a convex game, 
then the core $C(N,v)$ coincides with the base polyhedron. 
Then, there is the following relationship 
between the lexicographically optimal base and the egalitarian solution.

\begin{theorem}\label{Theorem:LexEga}{\rm (\cite{FujishigeSubm,HokariUchida2004})} 
In convex games $\Gamma^{\rm c}_{\mathbf{R}}$, 
the lexicographically optimal base 
is equivalent to the egalitarian solution.	
\qed\end{theorem}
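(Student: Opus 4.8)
The plan is to show that the (unique) egalitarian solution of a convex game is the least majorized element of its core, and that for a base polyhedron the least majorized element is exactly the increasingly maximal one; since the lexicographically optimal base is also unique, the two concepts must name the same point. Almost all of the real work is already contained in the Dutta--Ray theorems of Section~\ref{Sub2:PropertyofEga}; what remains is a translation between the languages of Lorenz-domination, majorization, decreasing minimality, and increasing maximality.

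First I would fix $(N,v)\in\Gamma^{\rm c}_{\mathbf{R}}$ and recall, as noted just before the statement, that convexity gives $C(N,v)=B(v)$, so a lexicographically optimal base is precisely an increasingly maximal element of $C(N,v)$. By Theorems~\ref{Theorem:RelationCoreEga} and \ref{Theorem:ExistenceofEgaR} the egalitarian solution $x^{*}$ exists, is unique, and lies in $C(N,v)$; hence $E(L(N,v))=\{x^{*}\}$. By Theorem~\ref{Theorem:EgaLorenzdominance}, $x^{*}$ Lorenz-dominates every element of $C(N,v)$ other than itself.

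Next I would run this through the dictionary of Section~\ref{Sub3:Def}. Since Lorenz-domination coincides with strict majorization modulo value-equivalence, the property of $x^{*}$ just recalled says that $x^{*}\prec y$ for every $y\in C(N,v)=B(v)$; that is, $x^{*}$ is a least majorized element of $B(v)$. A least majorized element therefore exists, so Proposition~\ref{PropofTamir} applies and $x^{*}$ is decreasingly minimal in $B(v)$. It is equally increasingly maximal: from $x^{*}\prec y$ together with $x^{*}(N)=y(N)$ one gets $\sum_{j=1}^{i}(x^{*}{\uparrow})_{j}\ge\sum_{j=1}^{i}(y{\uparrow})_{j}$ for all $i$, and if some $y\in B(v)$ satisfied $y>_{\rm inc}x^{*}$, then at the first index where $y{\uparrow}$ and $x^{*}{\uparrow}$ differ this inequality would be violated; hence $x^{*}\ge_{\rm inc}y$ for all $y\in B(v)$, i.e.\ $x^{*}$ is a lexicographically optimal base (the continuous counterpart of the M-convex-set equivalence quoted in Section~\ref{Sub3:Mconvex}). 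Since the (unweighted) lexicographically optimal base of $B(v)$ is unique (Fujishige \cite{Fujishige}), it must equal $x^{*}$, and therefore $E(L(N,v))=\{x^{*}\}$ is exactly the lexicographically optimal base.

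The step I expect to be the only one needing real care is not a single hard estimate but the bookkeeping around equivalences: the identification of value-equivalent vectors when passing between ``Lorenz-dominates'' and ``strictly majorized'', and the point that ``least majorized $\Rightarrow$ decreasingly minimal'' (Proposition~\ref{PropofTamir}) is only usable once a least majorized element is known to exist --- which is precisely why $x^{*}$ must be produced first from Theorem~\ref{Theorem:RelationCoreEga}. Finally, uniqueness of the lexicographically optimal base is what makes the two a priori set-valued notions collapse to the same single element; this should be cited from Fujishige rather than re-proved. An alternative, more computational route would be to run Dutta--Ray's algorithm of Section~\ref{Sub2:decompositonalgorithm} against Fujishige's decomposition algorithm and match them step by step, re-deriving \eqref{eq:uniform}--\eqref{eq:different} for the lexicographically optimal base; this is exactly the remark that the two algorithms coincide, but it is more laborious than the majorization argument above.
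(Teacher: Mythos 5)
The paper does not prove this theorem at all: it is stated as a quoted result, attributed to Fujishige \cite{FujishigeSubm} and Hokari--Uchida \cite{HokariUchida2004}, with no argument supplied. So there is no in-paper proof to match yours against; what I can say is that your argument is correct and self-contained modulo the results the paper already quotes. The chain ``Theorem~\ref{Theorem:EgaLorenzdominance} $\Rightarrow$ $x^{*}$ is least majorized in $C(N,v)=B(v)$ $\Rightarrow$ (Proposition~\ref{PropofTamir}) $x^{*}$ is dec-min $\Rightarrow$ $x^{*}$ is inc-max, i.e.\ a lexicographically optimal base'' is sound, and your direct derivation of inc-maximality from $\overline{x^{*}}\le\overline{y}$ via $\sum_{j=1}^{i}(x{\uparrow})_{j}=x(N)-\overline{x}_{n-i}$ is the right way to avoid relying on the M-convex-set version of the dec-min/inc-max equivalence, which the paper only states for $\mathbf{Z}$. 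You are also right to flag the two places where care is needed: the existence of a least majorized element must be established before Proposition~\ref{PropofTamir} can be invoked (you get it from Theorems~\ref{Theorem:ExistenceofEgaR} and~\ref{Theorem:RelationCoreEga}), and uniqueness of the unweighted lexicographically optimal base is what collapses the two singletons onto each other; for the latter you could even avoid the citation by noting that two distinct value-equivalent points of the convex set $B(v)$ would both be Lorenz-dominated by their midpoint. The references you would be replacing argue instead via the decomposition algorithm and the principal partition, essentially the ``more laborious'' alternative you describe at the end; your majorization route is shorter and fits the paper's Section~\ref{Sub3:Def} dictionary better.
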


On the other hand, 
as is shown in Section \ref{PropertyofEgaZ}, 
the lexicographically optimal base (decreasingly minimal element) 
is not equivalent to the egalitarian solution in discrete convex games. 
\par

Next, we investigate the relationship 
between the core and an M-convex set. 
For a game $(N,v)$, 
if $v$ is an integer-valued supermodular function, 
then the core of the game is an integral base polyhedron. 
Since an M-convex set is the set of 
integral members of an integral base polyhedron, 
the core of a discrete convex game is an M-convex set. 
Therefore, the following property holds in discrete convex games 
by Theorem \ref{theorem:Mleastmajorized}.

\begin{proposition}\label{Coreleastmajorized}
In discrete convex games $\Gamma^{\rm c}_{\mathbf{Z}}$, 
the core admits a least majorized element.
\qed\end{proposition}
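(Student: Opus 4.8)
The plan is to derive Proposition~\ref{Coreleastmajorized} directly from the two structural facts already assembled in the excerpt: first, that the core of a discrete convex game is an M-convex set, and second, that every M-convex set admits a least majorized element (Theorem~\ref{theorem:Mleastmajorized}). So the proof is essentially a matter of composing these observations carefully, after checking that the hypotheses of Theorem~\ref{theorem:Mleastmajorized} are genuinely met.

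First I would recall the setup. Let $(N,v)\in\Gamma^{\rm c}_{\mathbf{Z}}$, so $v\colon 2^N\to\mathbf{Z}$ is supermodular with $v(\emptyset)=0$. By Definition~\ref{defofcore}, the core of the grand coalition is
\[
C(N,v)=\{x\in\mathbf{R}^N\mid x(N)=v(N),\ x(T)\geq v(T)\ (\forall T\subsetneq N)\}.
\]
For a supermodular $v$ this is exactly the base polyhedron $B(v)$ of Definition~\ref{DefofBasepoly}, and since $v$ is integer-valued, $B(v)$ is an integral base polyhedron. As recorded after Definition~\ref{DefofMconcexset}, the set $B(v)\cap\mathbf{Z}^N$ of integral points of $B(v)$ is an M-convex set; this is the core of the discrete game, because in a discrete game payoffs are constrained to be integral vectors. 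Thus $C(N,v)$, viewed in $\Gamma^{\rm c}_{\mathbf{Z}}$, is an M-convex set.

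Next I would apply Theorem~\ref{theorem:Mleastmajorized}: an M-convex set admits a least majorized element. Applied to $B(v)\cap\mathbf{Z}^N$, this yields an element $x^*$ of the core with $x^*\prec y$ for every $y$ in the core, which is precisely the assertion of the proposition. One small point to address explicitly is nonemptiness: for the statement ``admits a least majorized element'' to be non-vacuous one wants the core to be nonempty, and indeed for a convex game the core is always nonempty (it contains, e.g., a marginal-worth vector), so $B(v)\cap\mathbf{Z}^N\neq\emptyset$. I expect there is no real obstacle here; the only thing requiring a sentence of care is the identification of the integral core with an M-convex set and the observation that the least majorized element furnished by Theorem~\ref{theorem:Mleastmajorized} lies in the core by construction. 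Everything else is bookkeeping over definitions already in place.
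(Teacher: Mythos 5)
Your proposal is correct and follows essentially the same route as the paper: the paper likewise observes that the core of a discrete convex game is the set of integral points of an integral base polyhedron, hence an M-convex set, and then invokes Theorem~\ref{theorem:Mleastmajorized}. Your extra remark on nonemptiness (via a marginal-worth vector, which is integral here) is a harmless refinement the paper leaves implicit.
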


\subsection{Properties of the egalitarian solution in Case $\mathbf{Z}$}
\label{PropertyofEgaZ}
We first consider the Property 1. 
Example \ref{ExampleofMultipleEga} below shows that 
there can exist multiple egalitarian solutions in Case $\mathbf{Z}$. 
That is, the Property 1 does not hold in Case $\mathbf{Z}$.

\begin{example}\label{ExampleofMultipleEga}
\rm
Let $N = \{1,2,3\}$, 
$v(\{i\}) = 0 ~ (i \in N)$, 
$v(\{1,2\}) = v(\{1,3\}) = v(N) = 1$, and
$v(\{2,3\}) = 0$. 
The egalitarian solutions are 
$E(L(N,v)) = \{(1,0), (0,1)\}$, 
which implies the non-uniqueness of the egalitarian solution
in Case $\mathbf{Z}$.
\qed\end{example}
 
%%\subsection{Egalitarian solution in discrete convex game}
Next, we consider the Property 2. 
The non-uniqueness of the egalitarian solution in discrete games
suggests two separate problems in Case $\mathbf{Z}$. 
The first question is 
whether there exists at least one egalitarian solution 
for any discrete convex game. 
The second is what is the relationship 
between the core and the egalitarian solution.\par
 	
We first consider the existence of the egalitarian solution 
in discrete convex games. 
The following fundamental result is a key property in this paper.

\begin{theorem}\label{Existenceofleastmajorizedelement}
For any discrete convex game $(N,v) \in \Gamma^{\rm c}_{\mathbf{Z}}$, 
there exists some $x \in C(N,v)$ that Lorenz-dominates any $y \in C(N,v)$ 
with $y{\downarrow} \neq x{\downarrow}$.
\qed\end{theorem}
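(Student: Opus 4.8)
The plan is to read off the structure of $C(N,v)$ as an M-convex set and then combine the existence of a least majorized element with the characterization of Lorenz-domination by strict majorization recalled in Section~\ref{Sub3:Def}. First, since $v$ is an integer-valued supermodular function, the core $C(N,v)$ is an integral base polyhedron, so its set of integral vectors --- which is what $C(N,v)$ denotes in the discrete setting --- is an M-convex set. By Theorem~\ref{theorem:Mleastmajorized}, this M-convex set admits a least majorized element; fix one and call it $x$. In particular $x \in C(N,v)$, and $x$ is majorized by every element of $C(N,v)$.

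Next I would take an arbitrary $y \in C(N,v)$ with $y{\downarrow} \neq x{\downarrow}$ and show that $x$ Lorenz-dominates $y$. Both $x$ and $y$ lie in the core of the convex game, so $x(N) = v(N) = y(N)$ and Lorenz-domination is well defined. By the equivalence ``$x$ Lorenz-dominates $y$ $\Leftrightarrow$ $x$ is strictly majorized by $y$'', it suffices to verify that $\overline{x} \leq \overline{y}$, that $\overline{x} \neq \overline{y}$, and that $\overline{x}_n = \overline{y}_n$. The first and third are precisely the statement that $x$ is majorized by $y$, which holds because $x$ is least majorized in $C(N,v)$.

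It remains to rule out $\overline{x} = \overline{y}$. Here I would use that $(x{\downarrow})_k = \overline{x}_k - \overline{x}_{k-1}$ for each $k$ (with the convention $\overline{x}_0 = 0$), and likewise for $y$: if $\overline{x} = \overline{y}$ then $x{\downarrow} = y{\downarrow}$, contradicting the hypothesis $y{\downarrow} \neq x{\downarrow}$. Hence $\overline{x} \neq \overline{y}$, so $x$ is strictly majorized by $y$ and therefore $x$ Lorenz-dominates $y$, which is exactly the assertion of the theorem.

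I do not anticipate a genuine obstacle; the argument is essentially bookkeeping built on the cited results. The one point that must be stated carefully is the distinction between \emph{majorized} and \emph{strictly majorized}: Theorem~\ref{theorem:Mleastmajorized} only guarantees that $x$ majorizes every core element, and it is exactly the hypothesis $y{\downarrow}\neq x{\downarrow}$ (that $x$ and $y$ are not value-equivalent) that promotes this to strict majorization, hence to Lorenz-domination. A secondary point worth making explicit is that $C(N,v)$ is read here as the set of integral core vectors, so that the M-convexity hypothesis of Theorem~\ref{theorem:Mleastmajorized} is genuinely available and the element $x$ it produces is an integral element of the core, as the statement requires.
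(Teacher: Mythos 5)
Your proposal is correct and follows essentially the same route as the paper: identify the core of a discrete convex game as an M-convex set, invoke the existence of a least majorized element (Proposition~\ref{Coreleastmajorized}, resting on Theorem~\ref{theorem:Mleastmajorized}), and convert majorization into Lorenz-domination via the strict-majorization equivalence of Section~\ref{Sub3:Def}. Your explicit check that $\overline{x}=\overline{y}$ would force $x{\downarrow}=y{\downarrow}$ is a useful piece of bookkeeping that the paper leaves implicit, but it is the same argument.
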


Theorem \ref{Existenceofleastmajorizedelement} 
is derived from Proposition \ref{Coreleastmajorized} as follows. 
Proposition \ref{Coreleastmajorized} and the fact that the core 
in discrete convex games 
is an {\rm M}-convex set
imply that there exists a least majorized element 
in the core. 
By these facts and the relationship between being majorized 
and Lorenz-domination as seen in Section \ref{Sub3:Def}, 
we obtain that for any discrete convex game, 
there exists an element of the core 
that Lorenz-dominates every element of the core 
not value-equivalent to the element. 
\par

Next, we show the existence of 
the egalitarian solution in any discrete convex game. 
Note that the following theorem does not state 
that all egalitarian solutions are contained in the core.

\begin{theorem}\label{ExistenceofEgalitarian}
In discrete convex games, 
there exists an egalitarian solution, 
that is, for any $(N,v) \in \Gamma^{\rm c}_{\mathbf{Z}}$, 
$E(L(N,v)) \neq \emptyset$ holds.
\qed\end{theorem}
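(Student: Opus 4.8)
The plan is to deduce $E(L(N,v)) \neq \emptyset$ from the fact that $L(N,v)$ is a \emph{finite}, nonempty set on which Lorenz-domination is a strict partial order: on any such set there is always an element that is not Lorenz-dominated by any other, and that is by definition a member of $E(L(N,v))$.

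First I would show that $L(N,v)$ is finite. Instantiating the defining condition of the Lorenz core $L(N,v)$ with the singleton coalition $T = \{i\}$ and the unique element $v(\{i\})$ of $E(L(\{i\},v)) = L(\{i\},v) = \{v(\{i\})\}$, the clause ``there is no $T \subsetneq N$ and $y \in E(L(T,v))$ with $x_T < y$'' forces $x_i \geq v(\{i\})$ for every $i \in N$; that is, every element of $L(N,v)$ is an imputation. Together with $x(N) = v(N)$ this yields $v(\{i\}) \leq x_i \leq v(N) - \sum_{j \in N \setminus \{i\}} v(\{j\})$ for each $i$, so, $v$ being integer-valued and $x \in \mathbf{Z}^N$, the set $L(N,v)$ is finite. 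It is also nonempty, since $\emptyset \neq C(N,v) \subseteq L(N,v)$ by Proposition \ref{Prop:InclusionCoreLo} together with the nonemptiness of the core of a discrete convex game (see Section \ref{Sub4:EgaandPoly}; this nonemptiness is already invoked in Theorem \ref{Existenceofleastmajorizedelement}).

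Next I would record that Lorenz-domination is well defined on $L(N,v)$, since all its members share the coordinate sum $v(N)$, and that it is a strict partial order there: irreflexivity is immediate from the ``strict inequality for some $i$'' requirement, and transitivity follows by chaining the inequalities $\sum_{j=1}^i (x{\downarrow})_j \leq \sum_{j=1}^i (y{\downarrow})_j$ while retaining strictness at the index that witnesses the first domination. A standard finiteness argument then concludes the proof: pick any $x_0 \in L(N,v)$; as long as $x_k \notin E(L(N,v))$, choose $x_{k+1} \in L(N,v)$ Lorenz-dominating $x_k$. By transitivity and irreflexivity the $x_k$ are pairwise distinct, so finiteness of $L(N,v)$ forces the process to terminate at some $x_m$, which by construction lies in $E(L(N,v))$. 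Hence $E(L(N,v)) \neq \emptyset$.

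The argument is elementary once finiteness of $L(N,v)$ is in hand, so the only delicate point is the boundedness claim: one must unwind the recursive definition of the Lorenz core to extract $x_i \geq v(\{i\})$, which hinges on the observation that $E$ of a one-element set is that set, so $E(L(\{i\},v)) = \{v(\{i\})\}$ and the singleton instantiation is legitimate. Note that convexity of $v$ enters only to guarantee $L(N,v) \neq \emptyset$ via nonemptiness of the core; in particular the machinery of discrete decreasing minimization on M-convex sets is not required for this existence statement, although it will be essential later for describing the structure of $E(L(N,v))$ and its relation to the core.
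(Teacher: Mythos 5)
Your proof is correct, and the termination mechanism is the same as the paper's: ascend a Lorenz-domination chain in $L(N,v)$ until it must stop. The differences are still worth recording. The paper argues by contradiction, seeding the chain with the least-majorized core element supplied by Theorem~\ref{Existenceofleastmajorizedelement}, and terminates the chain with the bare assertion that ``$L(N,v)$ is bounded''; you instead start from an arbitrary element of $L(N,v)$ (nonempty because $\emptyset \neq C(N,v) \subseteq L(N,v)$, via Proposition~\ref{Prop:InclusionCoreLo} and nonemptiness of the core of a discrete convex game) and actually prove finiteness, by unwinding the singleton clause of the recursive definition to obtain $v(\{i\}) \leq x_i \leq v(N) - \sum_{j \neq i} v(\{j\})$ for integral $x$. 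This makes your version both more self-contained --- the boundedness step, which the paper leaves unjustified, is precisely the delicate point, and your imputation bound supplies it --- and more economical, since you need only nonemptiness of the core rather than Theorem~\ref{Existenceofleastmajorizedelement} or the M-convex machinery behind it. Nothing is lost by dropping the paper's special starting point: the external Lorenz stability recorded in Remark~\ref{Remark:externalLorenzStablity} follows equally from your chain argument started at an arbitrary element together with transitivity of Lorenz-domination.
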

\begin{proof}
Assume, to the contrary, that $E(L(N,v)) = \emptyset$. 
Take any $x \in C(N,v)$ in Theorem \ref{Existenceofleastmajorizedelement}. 
If there is no element of the Lorenz core that Lorenz-dominates $x$, 
then $x \in E(L(N,v))$ holds, which contradicts the assumption 
that $E(L(N,v)) = \emptyset$. 
Therefore there exists some $y \in L(N,v)$ 
that Lorenz-dominates $x$. 
Here we obtain $y \notin C(N,v)$, 
since otherwise $x$ is Lorenz-dominated by the core element $y$, 
which contradicts the fact that $x$ is not Lorenz-dominated 
by any element of the core by Theorem \ref{Existenceofleastmajorizedelement}. 
Thus we have $y \in L(N,v) \setminus C(N,v)$. 
Note that $y \neq x$. 
\par
 		
Then, $E(L(N,v)) = \emptyset$ shows that $y \notin E(L(N,v))$. 
Hence there exists some $y_1 \in L(N,v)$ 
that Lorenz-dominates $y$. By the above argument, 
we have $y_1 \notin C(N,v)$ and $y_1 \neq y$. 
Moreover, $y_1 \neq x$ holds since if $y_1$ Lorenz-dominates $y$, 
then $y_1$ Lorenz-dominates $x$. 
By repeating the above arguments, 
we arrive at the Lorenz core element $y_k$ 
that is not Lorenz-dominated by any element of $L(N,v)$ 
because $L(N,v)$ is bounded. 
Note that $y, y_1, \dots, y_k, x$ are all distinct. 
However, this contradicts the assumption that $E(L(N,v)) = \emptyset$.
\end{proof}

\begin{remark}\label{Remark:externalLorenzStablity}
\rm
By the proof of Theorem \ref{ExistenceofEgalitarian}, 
we obtain that for any discrete convex game
$(N,v) \in \Gamma^{\rm c}_{\mathbf{Z}}$ and
for each $y \in L(N,v) \setminus E(L(N,v))$, 
there exists an $x \in E(L(N,v))$ 
that Lorenz-dominates $y$.
Llerena--Mauri \cite{Llerena=Mauri2017}
calls this property the {\em external Lorenz stability}.
\qed\end{remark}

Next, we consider the relationship 
between the core and the egalitarian solution in Case $\mathbf{Z}$. 
In Case $\mathbf{R}$, 
the egalitarian solution always belongs to the core 
for any convex game. 
The following example reveals that, 
in Case $\mathbf{Z}$, 
there can exist an egalitarian solution 
outside the core even in convex games. 
 
\begin{example}\label{CounterExample}
\rm
Let $N = \{1,2,3\}$ and define $v$ as in the following table. 
This example is based on Example 5 in Dutta--Ray \cite{Dutta-Ray1}.
 	
\begin{table}[htb]
\begin{center}
\caption{Values of $v$ and egalitarian solutions in Example \ref{CounterExample}}
\begin{tabular}{cccc} 
$S$ & $v(S)$ & $EL(S,v)$ \\ \hline
\{1\} & 40 &  \{40\}\\
\{2\} & 60 &  \{60\}   \\
\{3\} & 80 &  \{80\}    \\
\{1,2\} & 110 &\{(50,60)\}    \\
\{1,3\} & 120 & \{(40,80)\}\\
\{2,3\} & 150 & \{(70,80)\} \\
\{1,2,3\} & 210 & \{(60,70,80), (64,65,81), (65,64,81)\} \\ \hline
\end{tabular}
\end{center}
\end{table}
 	
This game is convex and $(60,70,80) \in E(L(N,v))$ 
is in the core. 
However, $(64,65,81)$ and $(65,64,81) \in E(L(N,v))$ 
are not contained in the core because for each vector, 
the sum of its second and third components is as follows respectively.
\[
\pushQED{\qed}
65 + 81 = 146 < 150 = v(\{2,3\}),\, 64+81=145 < 150 = v(\{2,3\}). \qedhere
\popQED
\]
\end{example}

Example \ref{CounterExample} poses the question
whether there always exists an egalitarian solution in the core. 
However, this question remains unsolved in this paper.
\par
 
As mentioned in Theorem \ref{Theorem:LexEga}, 
in convex games, 
lexicographically optimal base and 
the egalitarian solution are equivalent in Case $\mathbf{R}$. 
However, the fact that not all egalitarian solutions belong to the core 
even in discrete convex games shows that, 
the set of egalitarian solutions do not necessarily 
coincide with the set of dec-min elements 
in an M-convex set in Case $\mathbf{Z}$. 
\par

Finally,
we consider the Property 3. 
In discrete convex games, 
we have to consider two problems. 
(a) Whether every egalitarian solution Lorenz-dominates 
every element of the core 
that is not an egalitarian solution, 
(b) whether 
all egalitarian solutions satisfying (a) Lorenz-dominate
every other element of the Lorenz core.  
\par
 	
We first demonstrate that the egalitarian solutions 
outside the core do not satisfy the Property 3
and then we reveal that egalitarian solutions 
of the core have this property. 
\par

We consider Example \ref{CounterExample} again. 
The egalitarian solutions in the core of the game of Example \ref{CounterExample}
are $(64, 65, 81)$ and $(65,64, 81)$ (Table 1) 
and we take $(64,65,81) \in E(L(N,v))$. 
For  example, vector $(59, 71, 80)$ is in the core. 
Since its largest component 80 is smaller than 
that of $(64,65,81)$, $(64,65,81)$ does not Lorenz-dominate $(59, 71, 80)$. 
Note that $(59, 71, 80) \notin E(L(N,v))$ 
because $(60,70,80)$ Lorenz-dominates $(59, 71, 80)$. 
Thus, in discrete convex games, 
egalitarian solutions outside the core 
do not necessarily Lorenz-dominate every element of the core 
except for the egalitarian solution. 
This is the distinction of the case $\mathbf{R}$ and $\mathbf{Z}$. 
\par

We next consider the egalitarian solutions in the core. 
Since all least majorized elements in the core
Lorenz-dominate every element of the core 
that is not a least majorized element, 
the set of the egalitarian solutions in the core 
coincides with the set of the least majorized elements in the core. 
Therefore, in discrete convex games, 
the Property 3 holds for the egalitarian solutions in the core. 
\par
 	
In contrast, the Property 3 
does not hold for the elements of the Lorenz core as follows. 
For the game of Example \ref{CounterExample}, 
the egalitarian solution $(60,70,80) \in E(L(N,v))$ 
does not Lorenz-dominate $(64,64,82) \in L(N,v) \setminus E(L(N,v))$. 
Recall that $(60,70,80) \in C(N,v)$.
This fact shows that the egalitarian solutions of the core 
do not necessarily Lorenz-dominate every element of the Lorenz core 
that are not contained in $E(L(N,v))$.

%------------------------------------------------------ 
\section{Reduced game property}
\label{Section:RGP}
By Example \ref{CounterExample}, 
we see
that there can exist an egalitarian solution 
outside the core even in discrete convex games. 
Therefore, in Case $\mathbf{Z}$, 
the egalitarian solution and the dec-min element of the core
are not equivalent. 
Also, we do not know about
the existence of the egalitarian solution of the core 
in discrete convex games. \par
	
Thus, 
we are motivated to consider the Lorenz stable set introduced 
by Arin--Inarra \cite{Arin_Inarra} and 
Hougaard et al. \cite{HougaardPelegPetersen2001}, 
a subset of the core consisting of the elements 
that are not Lorenz-dominated 
by any other element of the core. 
Their approach is based on the fact that 
the core is considered to be the set of natural stable allocations.
For example, in a class of cost and surplus sharing games, 
the core plays a crucial role
(see e.g., \cite{PelegSudholter}).
In particular, this class is contained in a class of games 
arising from combinatorial optimization problems
including the polymatroid theory,
where the core also plays a central role
(cf., \cite{Curiel1997,HougaardPelegPetersen2001,PelegSudholter}).
We follow their approach and 
show that the Lorenz stable set in discrete convex games 
has nice properties such as the Davis and Maschler reduced game property 
and the converse reduced game property 
in Sections \ref{Section:RGP} and \ref{Section:CRGP}. 
\par

Dutta \cite{Dutta} has already shown 
that the egalitarian solution in convex games in continuous variables 
has these nice properties. 
He derives these results 
by making use of the properties of the principal partition 
explained in Section \ref{Sub2:decompositonalgorithm}. 
The point in our study is that 
we can give the proofs of these properties
by utilizing the canonical partition and the canonical chain
due to Frank--Murota \cite{FK1,FK2}. 
\par

The results of Sections \ref{Section:RGP} and \ref{Section:CRGP}
are summarized as follows.

\begin{enumerate}
\item 
In discrete convex games, 
the Lorenz stable set is nonempty, 
and every element of it Lorenz-dominates every element of the core 
not contained in the Lorenz stable set 
(Theorems \ref{LASNotempty} and \ref{Theorem:LSALorenzdomination}).

\item
In discrete convex games, 
the Lorenz stable set has the Davis and Maschler reduced game property
and the converse reduced game property 
(Theorems \ref{LSADMRGPTheorem} and \ref{LASCRGPTheorem}).
\end{enumerate}

\subsection{Lorenz stable set}\label{Sub5:LSA}
Fisrt, we give the definition of the Lorenz stable set.

\begin{definition}\label{DefofLSA} 
\rm (Lorenz stable set \cite{Arin_Inarra,HougaardPelegPetersen2001})
For a game $(N,v) \in \Gamma_{\mathbf{R}}$, the {\em Lorenz stable set} 
${\rm LSS}(N,v)$ is defined as follows: 
\begin{align}
{\rm LSS}(N,v) = \{x \in C(N,v) \mid \nexists \, y \in C(N,v): y ~ \mbox{Lorenz-dominates} ~ x\}. \label{eq: Lorenzstableset}
\end{align}
\qed\end{definition}

We define the Lorenz stable set in discrete games 
by replacing $\Gamma_{\mathbf{R}}$
with $\Gamma_{\mathbf{Z}}$ in the above definition.

\begin{remark}\label{LorenzmaximalImputation}
\rm
Hougaard et al. (2001) \cite{HougaardPelegPetersen2001} 
introduced the notion of the {\em Lorenz maximal imputation}, 
whose definition is exactly same as the Lorenz stable set. 
In this paper, we use ``Lorenz stable set'' 
following Arin--Inarra (2001) \cite{Arin_Inarra}.
\qed\end{remark}

The Lorenz stable set is contained in the core by its definition. 
That is,
\begin{equation}
{\rm LSS}(N,v) \subseteq C(N,v) \label{eq: inclusion}
\end{equation}
\noindent
holds for each game $(N,v) \in \Gamma$
regardless of the case $\mathbf{R}$ and $\mathbf{Z}$.
\par

The following properties hold for the Lorenz stable set.

\begin{theorem} \label{LASNotempty}
For any discrete convex game, the Lorenz stable set is nonempty. 
\qed\end{theorem}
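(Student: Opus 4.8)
The plan is to reduce the claim to a fact already established in the excerpt, namely that the core of a discrete convex game is an M-convex set (Proposition \ref{Coreleastmajorized} and the discussion preceding it), and that an M-convex set always admits a least majorized element (Theorem \ref{theorem:Mleastmajorized}). Concretely, let $(N,v) \in \Gamma^{\rm c}_{\mathbf{Z}}$. Since $v$ is integer-valued and supermodular, $C(N,v)$ coincides with the set of integral points of the integral base polyhedron $B(v)$, hence $C(N,v)$ is an M-convex set. By Theorem \ref{theorem:Mleastmajorized} it has a least majorized element $x$; in particular $C(N,v) \neq \emptyset$.

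Next I would argue $x \in {\rm LSS}(N,v)$. Suppose not: then by Definition \ref{DefofLSA} there is some $y \in C(N,v)$ that Lorenz-dominates $x$. By the equivalence recorded in Section \ref{Sub3:Def} (``$y$ Lorenz-dominates $x$ $\Leftrightarrow$ $y$ is strictly majorized by $x$''), this means $\overline{y} < \overline{x}$ with $\overline{y}_n = \overline{x}_n$, i.e. $\overline{y}_j \le \overline{x}_j$ for all $j$ with strict inequality somewhere. But $x$ is least majorized in $C(N,v)$, so $\overline{x} \le \overline{y}$ as well, forcing $\overline{x} = \overline{y}$, which contradicts strict majorization. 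Hence no element of $C(N,v)$ Lorenz-dominates $x$, so $x \in {\rm LSS}(N,v)$ and the Lorenz stable set is nonempty.

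The only delicate points are bookkeeping rather than substance: one must make sure the direction of the majorization/Lorenz-domination equivalence is used correctly (the excerpt defines Lorenz-domination via a decreasing order, and ``$x$ Lorenz-dominates $y$'' corresponds to ``$x$ strictly majorized by $y$'', which may read backwards at first glance), and one must invoke the remark in Section \ref{Sub3:Def} that if we identify value-equivalent vectors then being majorized and Lorenz-comparability agree, so that ``not Lorenz-dominated'' is exactly ``least majorized'' on $C(N,v)$. There is no real obstacle here; the content is entirely carried by Theorem \ref{theorem:Mleastmajorized} together with the identification of $C(N,v)$ with an M-convex set, and the remaining step is the short contradiction argument above. (This argument will be reused, essentially verbatim, when proving that every element of ${\rm LSS}(N,v)$ Lorenz-dominates every core element outside it.)
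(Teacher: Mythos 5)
Your proposal is correct and follows exactly the route the paper intends: the paper states Theorem \ref{LASNotempty} without an explicit proof, but its justification is precisely the combination of Proposition \ref{Coreleastmajorized} (the core of a discrete convex game is an M-convex set and hence admits a least majorized element) with the observation that a least majorized element cannot be Lorenz-dominated, which is the same argument the paper spells out around Theorem \ref{Existenceofleastmajorizedelement} and in the proof of Proposition \ref{LSAdecminEquivaProp}. Your handling of the direction of the majorization/Lorenz-domination equivalence is also correct.
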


\begin{theorem} \label{Theorem:LSALorenzdomination}
For any discrete convex game $(N,v) \in \Gamma^{\rm c}_{\mathbf{Z}}$, 
if $x \in {\rm LSS}(N,v)$, 
then $x$ Lorenz-dominates every element of the core 
except for the elements value-equivalent to $x$.
\qed\end{theorem}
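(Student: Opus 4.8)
The plan is to deduce Theorem~\ref{Theorem:LSALorenzdomination} from the polymatroid machinery already assembled, essentially by identifying the Lorenz stable set with the set of least majorized (equivalently, decreasingly minimal) elements of the core. First I would recall that for a discrete convex game $(N,v) \in \Gamma^{\rm c}_{\mathbf{Z}}$ the core $C(N,v)$ is exactly the M-convex set $B(v) \cap \mathbf{Z}^N$; this was noted just before Proposition~\ref{Coreleastmajorized}. By Theorem~\ref{theorem:Mleastmajorized} (or Proposition~\ref{Coreleastmajorized}) this M-convex set admits a least majorized element $x^\circ$, and by Proposition~\ref{PropofTamir} the least majorized elements coincide with the decreasingly minimal elements of the core. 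So there is at least one element of the core that is majorized by every element of the core.

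Next I would translate "majorized" into "Lorenz-dominated" using the equivalence recorded in Section~\ref{Sub3:Def}: for vectors with equal component sums, $x$ Lorenz-dominates $y$ if and only if $x$ is strictly majorized by $y$, and if we identify value-equivalent vectors then plain majorization is the same as Lorenz-domination. Since all elements of $C(N,v)$ have the same sum $v(N)$, a least majorized element $x^\circ$ satisfies: for every $y \in C(N,v)$ with $y{\downarrow} \neq x^\circ{\downarrow}$, $x^\circ$ strictly majorizes $y$, i.e. $x^\circ$ Lorenz-dominates $y$. Thus $x^\circ \in {\rm LSS}(N,v)$ and moreover $x^\circ$ has the domination property claimed in the theorem.

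It then remains to show that \emph{every} element $x \in {\rm LSS}(N,v)$ has this property, not just the one produced above. Here I would argue that $x \in {\rm LSS}(N,v)$ forces $x$ to be value-equivalent to $x^\circ$: if $x$ were not value-equivalent to the least majorized $x^\circ$, then $x^\circ$ would Lorenz-dominate $x$, and since $x^\circ \in C(N,v)$ this contradicts $x \in {\rm LSS}(N,v)$ (no core element Lorenz-dominates $x$). Hence $x{\downarrow} = x^\circ{\downarrow}$, so $x$ is itself a least majorized element of the core, and the conclusion for $x$ follows from the conclusion for $x^\circ$ together with the fact that Lorenz-domination depends only on the sorted vector; in particular "value-equivalent to $x$" and "value-equivalent to $x^\circ$" mean the same thing.

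The main obstacle, such as it is, is bookkeeping around the strict-versus-non-strict majorization distinction and the value-equivalence identification: one must be careful that "Lorenz-dominates every element of the core except those value-equivalent to $x$" matches exactly the statement "$x$ is least majorized and strict majorization holds off the value-equivalence class," and that the equivalence between Lorenz-domination and strict majorization quoted in Section~\ref{Sub3:Def} is applied only to vectors with a common sum (which holds here since every core element sums to $v(N)$). I would also want to state explicitly, perhaps as a corollary of this argument, that ${\rm LSS}(N,v)$ equals the set of dec-min elements of the M-convex set $C(N,v)$, since that is precisely what makes Theorem~\ref{Theorem:structureofDecmin} (the canonical-partition description) available for the later reduced-game-property proofs.
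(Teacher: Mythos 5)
Your proposal is correct and follows essentially the same route as the paper: the paper justifies this theorem via Proposition~\ref{Coreleastmajorized} (the core of a discrete convex game is an M-convex set and hence admits a least majorized element), Proposition~\ref{PropofTamir}, and the strict-majorization/Lorenz-domination equivalence, concluding as you do that ${\rm LSS}(N,v)$ is exactly the (mutually value-equivalent) set of least majorized, i.e.\ dec-min, elements of the core (Theorem~\ref{Existenceofleastmajorizedelement} and Proposition~\ref{LSAdecminEquivaProp}). Your final observation that ${\rm LSS}(N,v)$ equals the set of dec-min elements is precisely the paper's Proposition~\ref{LSAdecminEquivaProp}.
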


Theorem \ref{Theorem:LSALorenzdomination} shows that the Lorenz stable set 
has the Property 3. 
This is one of the reasons that 
we consider the Lorenz stable set 
instead of the egalitarian solution in Case $\mathbf{Z}$.

\subsection{Davis and Maschler reduced game property}\label{Sub5:DMRGP}
In this subsection, 
we consider
the Davis and Maschler reduced game property
of the Lorenz stable set
by using the properties of the canonical chain and the canonical partition 
describing the structures of dec-min elements of an M-convex set. 
\par
We first show that for every discrete convex game, 
the Lorenz stable set coincides with the set of dec-min elements of the core. 
This enables us to apply the results of Frank--Murota \cite{FK1,FK2}
to the study of the Lorenz stable set.

\begin{proposition} \label{LSAdecminEquivaProp}
In discrete convex games, 
the Lorenz stable set coincides with the set of dec-min elements in the core.
\qed\end{proposition}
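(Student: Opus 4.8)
The plan is to show the two inclusions ${\rm LSS}(N,v) \subseteq \{\text{dec-min elements of } C(N,v)\}$ and $\{\text{dec-min elements of } C(N,v)\} \subseteq {\rm LSS}(N,v)$, using the relationship between Lorenz-domination and majorization recorded in Section~\ref{Sub3:Def} together with the existence of a least majorized element in an M-convex set (Theorem~\ref{theorem:Mleastmajorized}), which applies because the core of a discrete convex game is an M-convex set. The key preliminary observation is that, by Proposition~\ref{Coreleastmajorized} and Proposition~\ref{PropofTamir}, the core $C(N,v)$ admits a least majorized element and, moreover, an element of $C(N,v)$ is least majorized if and only if it is decreasingly minimal in $C(N,v)$; so it suffices to prove that ${\rm LSS}(N,v)$ equals the set of least majorized elements of $C(N,v)$.

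For the first inclusion, let $x \in {\rm LSS}(N,v)$, so no element of $C(N,v)$ Lorenz-dominates $x$. Let $z$ be a least majorized element of $C(N,v)$, which exists by the above. Then $z \prec x$, i.e. $\overline{z} \le \overline{x}$ with $\overline{z}_n = \overline{x}_n$. If $z$ and $x$ are not value-equivalent, then $\overline{z} < \overline{x}$, which means $z$ strictly majorizes... more precisely $z$ is strictly majorized by $x$, hence $z$ Lorenz-dominates $x$ by the equivalence in Section~\ref{Sub3:Def}, contradicting $x \in {\rm LSS}(N,v)$ since $z \in C(N,v)$. Therefore $x$ is value-equivalent to $z$, and since being least majorized depends only on the sorted vector (the majorization relation does), $x$ is itself least majorized, hence decreasingly minimal in $C(N,v)$.

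For the converse inclusion, let $x$ be a dec-min element of $C(N,v)$; equivalently $x$ is least majorized in $C(N,v)$. Suppose for contradiction that some $y \in C(N,v)$ Lorenz-dominates $x$. By the equivalence recalled in Section~\ref{Sub3:Def}, $y$ Lorenz-dominates $x$ means $y$ is strictly majorized by $x$, i.e. $\overline{y} < \overline{x}$ with $\overline{y}_n = \overline{x}_n$. But $x$ least majorized gives $\overline{x} \le \overline{y}$, a contradiction. Hence no element of the core Lorenz-dominates $x$, so $x \in {\rm LSS}(N,v)$. Combining the two inclusions gives the claim.

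The steps here are all short once the right dictionary is in place, so I do not expect a serious obstacle; the one point to handle with care is the direction of the strict-majorization/Lorenz-domination equivalence (the paper's convention has ``$x$ Lorenz-dominates $y$'' $\Leftrightarrow$ ``$x$ is strictly majorized by $y$'', which is the reverse of what one might naively guess), and the need to invoke Proposition~\ref{PropofTamir} in the form ``least majorized $\Leftrightarrow$ dec-min'' rather than conflating the two notions. One should also make explicit that value-equivalent elements are treated identically by both ${\rm LSS}$ and the dec-min property, which is why the first inclusion closes even though a priori $x$ is only shown to be value-equivalent to a least majorized element rather than equal to one.
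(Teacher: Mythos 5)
Your proof is correct and follows essentially the same route as the paper's: both identify ${\rm LSS}(N,v)$ with the set of least majorized elements of the core via the Lorenz-domination/strict-majorization dictionary, then invoke Proposition~\ref{Coreleastmajorized} for existence and Proposition~\ref{PropofTamir} to pass from least majorized to dec-min. Your version merely spells out the two inclusions (including the value-equivalence point) that the paper compresses into its opening sentence.
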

\begin{proof}
Note first that since every least majorized element of the core
Lorenz-dominates every element of the core 
that is not a least majorized element, 
the Lorenz stable set coincides with 
the set of the least majorized elements of the core 
(see also Definition \ref{DefofLSA}). 
The existence of a least majorized element of the core is guaranteed by Proposition \ref{Coreleastmajorized}.
Since if the core admits a least majorized element, 
then an element is the least majorized in the core 
if and only if 
it is a dec-min element in the core by Proposition \ref{PropofTamir}, 
the Lorenz stable set coincides with the set of dec-min elements of the core. 
\end{proof}

\begin{remark}\label{equivaofIncmaxDecmin}
\rm
As noted in Section \ref{Subsec2Def}, 
even when we define the notion of Lorenz-domination in an increasing order, 
its change does not affect the results of
Sections \ref{Section:RGP} and \ref{Section:CRGP}. 
This is justified by the following property
(e.g., Frank--Murota \cite{FK2} and Tamir \cite{Tamir}). 
Let $Q$ be an arbitrary subset of $\mathbf{R}^N $ 
and assume that $Q$ admits a least majorized element. 
For any $x \in Q$ the following three conditions are equivalent.

\begin{enumerate}[(A)]
\item $x$ is least majorized in $Q$.
\item $x$ is decreasingly minimal in $Q$.
\item $x$ is increasingly maximal in $Q$.
\end{enumerate}
\noindent
Since, in discrete convex games, 
the Lorenz stable set coincides with the set of dec-min elements of the core 
by Proposition \ref{LSAdecminEquivaProp}, 
the equivalence between (B) and (C) implies that 
the Lorenz stable set coincides with the set of inc-max elements of the core. 
\qed\end{remark}

Next we define the reduced game 
and the Davis and Maschler reduced game property.

\begin{definition} \label{DefofReducedgame} \rm (Reduced game (Davis--Maschler \cite{DM}))
Let $(N,v) \in \Gamma_{\mathbf{R}}$ be a game, $S \subsetneq N$, 
and $x \in \mathbf{R}^N$ be a payoff vector. 
The {\em reduced game} with respect to $S$ 
and $x$ is the game $(S, v^x_S)$ where 
\begin{equation}
v^x_S(T) = \begin{cases}
0 & (T = \emptyset), \\
v(N) - x(N \setminus S)  & (T = S), \\
\max_{Q \subseteq N \setminus S} \{v(T \cup Q) - x(Q) \} & ( T \subsetneq S). \label{eq:reducedgame}
\end{cases}
\end{equation}
\qed\end{definition}

\begin{definition} \label{DefofDMRGP} \rm (Davis and Maschler reduced game property \cite{DM})
Let $\sigma$ be a solution over a class $\Gamma_{\mathbf{R}}$ of games. 
Then $\sigma$ is said to have the {\em Davis and Maschler reduced game property} 
over $\Gamma_{\mathbf{R}}$, 
when for all $(N,v) \in \Gamma_{\mathbf{R}}$, 
for all $x \in \sigma(N,v)$, and
for all $S \subsetneq N$, 
$(S, v^x_S) \in \Gamma_{\mathbf{R}}$ and $x_S \in \sigma(S, v^x_S)$ hold.
\qed\end{definition}

We define the reduced game and 
the Davis and Maschler reduced game property in Case $\mathbf{Z}$ 
by replacing $\mathbf{R}$ with $\mathbf{Z}$ in the above definitions. 
\par

Dutta \cite{Dutta} shows the following fact in Case $\mathbf{R}$. 
This also holds for any discrete convex game. 
For completeness, we give the proof.

\begin{lemma}\label{ReducedgameConvexLemma}
For any discrete convex game $(N,v) \in \Gamma^{\rm c}_{\mathbf{Z}}$, 
for all $S \subseteq N$, and 
for all $y \in {\rm LSS}(N, v)$, 
$(S, v^y_S)$ is a discrete convex game.
\qed\end{lemma}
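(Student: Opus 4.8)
The plan is to show that the reduced characteristic function $v^y_S$ is supermodular whenever $(N,v)$ is supermodular and $y$ lies in the core (in fact it suffices that $y \in C(N,v)$, which holds since ${\rm LSS}(N,v) \subseteq C(N,v)$ by \eqref{eq: inclusion}). Integer-valuedness of $v^y_S$ is immediate: $v$ is integer-valued and $y$ is an integral vector, so each of the three cases in \eqref{eq:reducedgame} produces an integer (the maximum in the case $T \subsetneq S$ is a maximum of finitely many integers). So the real content is the supermodular inequality
\[
v^y_S(T_1) + v^y_S(T_2) \le v^y_S(T_1 \cup T_2) + v^y_S(T_1 \cap T_2)
\]
for all $T_1, T_2 \subseteq S$.

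First I would dispose of the degenerate cases. If $T_1 = \emptyset$ or $T_2 = \emptyset$ the inequality is trivial since $v^y_S(\emptyset) = 0$. The cases where one of $T_1, T_2$ equals $S$ (so $T_1 \cup T_2 = S$) require a short separate argument comparing $v(N) - y(N \setminus S)$ against the defining maxima; here one uses that $y(N \setminus S) \ge v(N \setminus S \cup Q) - \dots$ type core inequalities, or more simply that for $T \subsetneq S$ the choice $Q = N \setminus S$ is feasible in the max defining $v^y_S(T)$, giving $v^y_S(T) \ge v(T \cup (N\setminus S)) - y(N \setminus S)$, combined with supermodularity of $v$ applied to $T \cup (N \setminus S)$ and $S$. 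The main case is $T_1, T_2 \subsetneq S$ with $T_1 \cup T_2 \subsetneq S$ as well. Pick $Q_1, Q_2 \subseteq N \setminus S$ attaining the maxima for $T_1$ and $T_2$ respectively. Then apply supermodularity of $v$ to the sets $T_1 \cup Q_1$ and $T_2 \cup Q_2$: their union is $(T_1 \cup T_2) \cup (Q_1 \cup Q_2)$ and their intersection is $(T_1 \cap T_2) \cup (Q_1 \cap Q_2)$. This yields
\[
v(T_1 \cup Q_1) + v(T_2 \cup Q_2) \le v\bigl((T_1\cup T_2)\cup(Q_1\cup Q_2)\bigr) + v\bigl((T_1\cap T_2)\cup(Q_1\cap Q_2)\bigr),
\]
and since $y(Q_1) + y(Q_2) = y(Q_1 \cup Q_2) + y(Q_1 \cap Q_2)$ (additivity of $y$ on disjoint-or-not subsets, using inclusion–exclusion on indices), subtracting gives exactly $v^y_S(T_1) + v^y_S(T_2) \le v^y_S(T_1 \cup T_2) + v^y_S(T_1 \cap T_2)$ after observing that $Q_1 \cup Q_2$ is feasible for $T_1 \cup T_2$ and $Q_1 \cap Q_2$ is feasible for $T_1 \cap T_2$.

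The main obstacle, and the only place where care is genuinely needed, is the boundary bookkeeping when $T_1 \cup T_2 = S$: there the formula for $v^y_S(S)$ is $v(N) - y(N \setminus S)$ rather than a maximum, so the clean subtraction above does not directly apply and one must instead show $v(T_1 \cup Q_1) + v(T_2 \cup Q_2) - y(Q_1) - y(Q_2) \le v(N) - y(N \setminus S) + v^y_S(T_1 \cap T_2)$. The trick is to enlarge $Q_1 \cup Q_2$ to all of $N \setminus S$ at the cost controlled by supermodularity of $v$ and the core inequality $y(N \setminus S) \le$ (a suitable difference), i.e., for the core element $y$ one has $y\bigl((N\setminus S)\setminus(Q_1\cup Q_2)\bigr) \ge v(N) - v\bigl((T_1\cup T_2\cup Q_1\cup Q_2)\bigr)$ type bounds coming from $y(R) \ge v(R)$ and $y(N)=v(N)$. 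Assembling these is routine once the right supermodularity instance is chosen, and I expect this to mirror exactly the argument Dutta \cite{Dutta} gives in Case $\mathbf{R}$, with no new difficulty introduced by integrality since sums and maxima of integers are integers.
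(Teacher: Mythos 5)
Your proposal matches the paper's proof in its essential mechanism: both pick maximizers $Q_i \subseteq N\setminus S$ attaining $v^y_S(T_i)$, apply supermodularity of $v$ to $T_1\cup Q_1$ and $T_2\cup Q_2$ together with the modularity $y(Q_1)+y(Q_2)=y(Q_1\cup Q_2)+y(Q_1\cap Q_2)$, and then bound the result by the maxima defining $v^y_S(T_1\cup T_2)$ and $v^y_S(T_1\cap T_2)$. Your extra attention to the boundary case $T_1\cup T_2=S$ is warranted (the paper's proof silently treats $v^y_S(T)$ as the maximum formula even for $T=S$), and it is resolved exactly as you indicate: for $y\in C(N,v)$ one has $y(N\setminus S)-y(Q)=v(N)-y(S\cup Q)\le v(N)-v(S\cup Q)$ for every $Q\subseteq N\setminus S$, so $v(N)-y(N\setminus S)=\max_{Q\subseteq N\setminus S}\{v(S\cup Q)-y(Q)\}$ and the main argument goes through uniformly.
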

\begin{proof}
For any $T_i \subseteq S ~ (i = 1,2)$, 
there exists some $R_i \subseteq N \setminus S $ 
such that
\begin{align}
v^y_S(T_i) = \max\{v(T_i \cup R) - y(R) \mid R \subseteq N \setminus S\} 
= v(T_i \cup R_i) - y(R_i). \nonumber
\end{align}
\noindent
By using the supermodularity of $v$, we have
\begin{align*}
&v^y_S(T_1) + v^y_S(T_2)   \\
&= v(T_1 \cup R_1) - y(R_1) + v(T_2 \cup R_2) - y(R_2)  \\
& = v(T_1 \cup R_1) + v(T_2 \cup R_2) - y(R_1 \cup R_2) - y(R_1 \cap R_2) \\
& \leq  v((T_1 \cup R_1) \cup (T_2 \cup R_2)) + v((T_1 \cup R_1) \cap (T_2 \cup R_2)) - y(R_1 \cup R_2) - y(R_1 \cap R_2) \nonumber \\
& =  v((T_1 \cup T_2) \cup (R_1 \cup R_2)) - y(R_1 \cup R_2) + v((T_1 \cap T_2) \cup (R_1 \cap R_2)) - y(R_1 \cap R_2) \\
& \leq  \max\{v((T_1 \cup T_2) \cup Q) - y(Q) \mid Q \subseteq N \setminus S\} \\
& \quad + \max\{v((T_1 \cap T_2) \cup Q) - y(Q) \mid Q \subseteq N \setminus S\} \\
& =  v^y_S(T_1 \cup T_2) + v^y_S(T_1 \cap T_2), 
\end{align*}
\noindent which shows the supermodularity of $v^y_S$.  
\end{proof}

Peleg \cite{Peleg} has already shown 
the Davis and Maschler reduced game property 
of the core in Case $\mathbf{R}$. 
This also holds for any discrete game. 
Its proof is exactly same as that of Peleg,
but,
we give the proof for the sake of completeness.
Note that we do not assume 
the convexity of games in the following theorem.

\begin{theorem}\label{Theorem:CoreDMRGP}  
The core has the Davis and Maschler reduced game property
for any discrete game. 
That is, 
for all $(N,v) \in \Gamma_{\mathbf{Z}}$, 
for all $x \in {\rm LSS}(N,v)$, 
and for all $S \subsetneq N$, 
$(S, v^x_S) \in \Gamma_{\mathbf{Z}}$ and
$x_S \in {\rm LSS}(S, v^x_S)$ hold. 
\qed\end{theorem}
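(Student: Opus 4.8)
The plan is to reproduce Peleg's \cite{Peleg} verification of the three defining conditions of the (discrete) core, in the order: integrality of the reduced characteristic function, Pareto efficiency of $x_S$, and coalitional rationality of $x_S$. Fix $(N,v) \in \Gamma_{\mathbf{Z}}$, an element $x$ of the core $C(N,v)$, and a proper coalition $S \subsetneq N$; recall that $x$ has integral components because payoffs in a discrete game are integral, and that $(N,v)$ is \emph{not} assumed convex.

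First I would check that $(S, v^x_S) \in \Gamma_{\mathbf{Z}}$. By \eqref{eq:reducedgame}, $v^x_S(S) = v(N) - x(N\setminus S)$, and for $T \subsetneq S$ the value $v^x_S(T)$ is the maximum over the finite family $\{Q \mid Q \subseteq N\setminus S\}$ of the numbers $v(T\cup Q) - x(Q)$; since $v$ is integer-valued and $x$ is integral, each such quantity is an integer, so $v^x_S$ is integer-valued with $v^x_S(\emptyset) = 0$. Next, efficiency is immediate: using $x(N) = v(N)$ (valid since $x \in C(N,v)$), $x_S(S) = x(N) - x(N\setminus S) = v(N) - x(N\setminus S) = v^x_S(S)$.

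The substantive step is coalitional rationality: for every nonempty $T \subsetneq S$ I must show $x_S(T) = x(T) \ge v^x_S(T)$, that is, $x(T) \ge v(T\cup Q) - x(Q)$ for every $Q \subseteq N\setminus S$, equivalently $x(T\cup Q) \ge v(T\cup Q)$. The key observation — essentially the only place where a hypothesis is genuinely used — is that $T\cup Q$ is always a \emph{proper} subset of $N$: since $T \subsetneq S$ there is some $i \in S\setminus T$, and such an $i$ cannot lie in $Q \subseteq N\setminus S$ either, so $i \notin T\cup Q$. As $T\cup Q$ is then a nonempty proper subcoalition of $N$, the core inequality $x(T\cup Q) \ge v(T\cup Q)$ holds by Definition \ref{defofcore}. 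Maximizing over $Q$ gives $x(T) \ge v^x_S(T)$, and together with the efficiency identity this yields $x_S \in C(S, v^x_S)$.

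I do not expect a genuine obstacle; the argument is largely bookkeeping. The one point needing a moment's care is the claim $T\cup Q \ne N$ in the coalitional-rationality step, which is exactly why one requires $T$ to be a \emph{proper} subset of $S$ (and $Q$ disjoint from $S$) rather than merely $T \subseteq S$ — without it the relevant core inequality would not be available. Convexity of $(N,v)$ plays no role, consistent with the remark preceding the statement; convexity (via Lemma \ref{ReducedgameConvexLemma}) will only be needed later when the analogous property is established for the Lorenz stable set.
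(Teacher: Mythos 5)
Your proposal is correct and follows essentially the same route as the paper's own proof (which likewise reduces everything to the identity $v^x_S(T) - x(T) = \max_Q\{v(T\cup Q) - x(T\cup Q)\} \le 0$ and the efficiency computation for $T=S$). The only differences are that you additionally spell out the integrality of $v^x_S$ and the reason why $T\cup Q$ is a proper subset of $N$, both of which the paper leaves implicit.
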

\begin{proof}
Take any $x \in C(N,v)$ and 
$S \subseteq N ~ (S \neq \emptyset)$. 
We want to show that $x_S \in C(S, v^x_S)$. 
First we note that $x(S) = v^x_S(S)$. 
Indeed, if $T = S$, then using $x(N) = v(N)$, 
we have $v^x_S(T) - x(T) = v(N) - x(N \setminus S) - x(S) = v(N) - x(N) = 0$, 
which shows that $x(S) = v^x_S(S)$. If $T \subsetneq S$, 
then the following inequality holds.
\begin{align*}
v^x_S(T) -x(T) &  = \max \{v(T \cup Q) - x(Q) \mid Q \subseteq N \setminus S\} - x(T) \\
& =  \max \{v(T \cup Q) - x(T \cup Q) \mid Q \subseteq N \setminus S\} \\
&\leq 0. 
\end{align*} 
\noindent
This inequality implies that 
$x(T) \geq v^x_S(T)$
for any $T \subsetneq S$. 
Therefore we obtain $x_S \in C(S, v^x_S)$.
\end{proof}

Here we show 
the Davis and Maschler reduced game property 
of the Lorenz stable set
in discrete convex games. 
We emphasize that the proof of the following theorem 
relies heavily on the properties of the canonical chain and the canonical partition.

\begin{theorem} \label{LSADMRGPTheorem}
For any discrete convex game $(N,v) \in \Gamma^{\rm c}_{\mathbf{Z}}$, 
the Lorenz stable set has the Davis and Maschler reduced game property. 
That is, if $x \in {\rm LSS}(N,v)$, 
then $x_S \in {\rm LSS}(S,v^x_S)$ holds for all $S \subseteq N ~ (S \neq \emptyset)$.
\qed\end{theorem}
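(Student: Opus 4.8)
The plan is to combine three facts already established: (i) by Proposition~\ref{LSAdecminEquivaProp}, for a discrete convex game the Lorenz stable set equals the set of dec-min elements of the core; (ii) by Lemma~\ref{ReducedgameConvexLemma}, the reduced game $(S,v^x_S)$ is again a discrete convex game, so its core is an M-convex set and its Lorenz stable set is again the dec-min set of that core; and (iii) by Theorem~\ref{Theorem:CoreDMRGP}, the core itself has the Davis and Maschler reduced game property, so $x_S \in C(S,v^x_S)$ whenever $x \in C(N,v)$. Granting all this, it suffices to show: if $x$ is dec-min in $C(N,v)$, then $x_S$ is dec-min in $C(S,v^x_S)$.

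First I would set $B = C(N,v)$ (an M-convex set, hence $B = B(g)\cap\mathbf{Z}^N$ for the integer supermodular $g$ with $g(T) = v(T)$) and invoke Theorem~\ref{Theorem:structureofDecmin} to get the canonical chain $\mathcal{C}^* = \{C_1 \subsetneq \dots \subsetneq C_q\}$, canonical partition $\{S_1,\dots,S_q\}$, and essential value-sequence $\beta_1 > \dots > \beta_q$ of $B$. The dec-min element $x$ then satisfies $x_i \in \{\beta_k-1,\beta_k\}$ for $i \in S_k$ and $x_{S_k} \in B_k \cap T_k$ for each $k$. The core of the reduced game is likewise an M-convex set $B^x_S := C(S,v^x_S) = B(g^x_S)\cap\mathbf{Z}^S$ with $g^x_S(T) = v^x_S(T)$; I would compute its canonical partition/chain/value-sequence and show they are exactly the restrictions of those of $B$ to $S$ --- concretely, that the canonical chain of $B^x_S$ is $\{C_k \cap S : C_k \cap S \text{ distinct}\}$ and that on $S_k \cap S$ the essential value stays $\beta_k$. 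The key identity to verify here is that $g^x_S(T \cup (C_{k-1}\cap S)) - g^x_S(C_{k-1}\cap S)$ matches $g(T \cup C_{k-1}) - g(C_{k-1})$ up to a correction by $x$ on $N\setminus S$, so that the maximizers of $h_k$ localize correctly; this uses the definition $v^x_S(T) = \max_{Q \subseteq N\setminus S}\{v(T\cup Q) - x(Q)\}$ together with the fact that $x$ restricted to $C_{k-1}\setminus S$ takes the ``already settled'' values consistent with the chain, so the maximizing $Q$ can be taken inside $C_{k-1}\setminus S$. Once the canonical data of $B^x_S$ is pinned down as the $S$-restriction of that of $B$, applying Theorem~\ref{Theorem:structureofDecmin} to $B^x_S$ shows $x_S$ is dec-min in $C(S,v^x_S)$, i.e.\ $x_S \in {\rm LSS}(S,v^x_S)$, and combined with $x_S \in C(S,v^x_S)$ from Theorem~\ref{Theorem:CoreDMRGP} this finishes the proof; the case $S = N$ is trivial so we may assume $S \subsetneq N$.

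The main obstacle I anticipate is exactly step (iii) above: showing that the canonical partition and essential value-sequence of the reduced game's core are obtained by restricting those of the original core to $S$. This is the discrete analogue of Dutta's argument via the principal partition, and the delicate point is controlling the maximization over $Q \subseteq N\setminus S$ in the definition of $v^x_S$ --- one needs that an optimal $Q$ respects the canonical chain $\mathcal{C}^*$, which should follow from the near-uniformity~\eqref{eq:near-uniform} of $x$ together with the nested structure of maximizers of $g(X) - \beta|X|$, but making this precise requires a careful induction along the chain. A cleaner alternative, which I would try first, is to bypass the canonical machinery on the reduced game and instead argue directly with the 1-tightening characterization (Theorem~\ref{1tighteningDecminTheorem}): if $x_S$ admitted a 1-tightening step in $B^x_S$, say via indices $i,j \in S$ with $(x_S)_j \geq (x_S)_i + 2$ and $x_S - \chi_i + \chi_j \in C(S,v^x_S)$, then I would lift this to a 1-tightening step for $x$ in $C(N,v)$ --- the lifted vector $x - \chi_i + \chi_j$ stays in $C(N,v)$ because any violated constraint $x(U) - 1 \geq v(U)$ with $i \in U$, $j \notin U$ would, by taking $Q$ optimal for $v^x_S$ on $U\cap S$, produce a violated constraint in the reduced game --- contradicting that $x$ is dec-min in $C(N,v)$. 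This direct route localizes the difficulty to a single core-membership lifting lemma and avoids recomputing canonical partitions, so I expect it to be the shorter path, though one still has to handle the bookkeeping of which coalitions $U$ in $N$ correspond to tight constraints in $(S,v^x_S)$.
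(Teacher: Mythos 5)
Your preferred (second) route is correct and is genuinely different from the paper's proof, so it is worth comparing the two. The paper argues by contradiction entirely within the \emph{original} game's canonical chain $C_1\subsetneq\dots\subsetneq C_q$: assuming $y_T$ is Lorenz-dominated by some $x\in{\rm LSS}(T,v^y_T)$, it first establishes (Claim 1) that $\sum_{i\in T\cap C_k}y_i>\sum_{i\in T\cap C_k}x_i$ for some $k$ --- via an induction along the chain using the near-uniformity $y_i\in\{\beta_k-1,\beta_k\}$ to force value-equivalence level by level --- and then uses the tightness $y(C_k)=v(C_k)$ to exhibit a violated core constraint $v^y_T(T\cap C_k)>x(T\cap C_k)$ in the reduced game. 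Your first sketched route (recomputing the canonical partition of the reduced game's core and showing it is the $S$-restriction of the original one) is \emph{not} what the paper does, and you are right that it is the delicate path; the paper avoids it. Your second route --- reduce to the $1$-tightening characterization (Theorem \ref{1tighteningDecminTheorem}) and prove a core-membership lifting lemma --- works and is arguably cleaner: if $x+\chi_i-\chi_j\notin C(N,v)$ then some $U$ with $j\in U$, $i\notin U$ satisfies $x(U)-1<v(U)$, and taking $Q=U\setminus S$ as a feasible (not optimal) set in the definition of $v^x_S$ gives $v^x_S(U\cap S)\ge v(U)-x(U\setminus S)>x(U\cap S)-1$, so the same transfer violates the reduced core; contrapositively a $1$-tightening step for $x_S$ lifts to one for $x$, contradicting dec-minimality of $x$. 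This localizes all the work to a two-element exchange, avoids the canonical-chain bookkeeping entirely, and in fact mirrors the trick the paper itself uses for the \emph{converse} reduced game property (Theorem \ref{LASCRGPTheorem}), so your approach would unify the two proofs. Two small corrections: per Definition \ref{Defof1tight} the $1$-tightening step with $x_j\ge x_i+2$ is $x+\chi_i-\chi_j$ (transfer from rich $j$ to poor $i$), not $x-\chi_i+\chi_j$, and the at-risk coalitions are those containing the donor $j$ and excluding the recipient $i$; and in the lifting you should take $Q=U\setminus S$ as a candidate in the maximization rather than ``$Q$ optimal'' --- the lower bound is what you need. Neither affects the validity of your plan.
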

\begin{proof}
Assume, to the contrary, that for some $y \in {\rm LSS}(N,v)$ 
and for some  $T \subsetneq N ~ (T \neq \emptyset)$, 
$y_T$ is Lorenz-dominated by some $x \in {\rm LSS}(T, v^y_T)$. 
Note that $x \in C(T, v^y_T)$ by (\ref{eq: inclusion}). 
Then, we can prove the following claim, which is proved later.
\begin{claim}\label{Claim}
\begin{equation}
\exists k \in \{1, \dots, q\}: \sum_{i \in T \cap C_k} y_i > \sum_{i \in T \cap C_k} x_i, \label{eq: claim}
\end{equation}
\end{claim}
\noindent
where $\{C_1, \dots, C_q\}$ is the canonical chain for $N$ 
constructed by the iterative procedure in Section \ref{Sub3:Mconvex}. 
\par
For $k$ in Claim \ref{Claim}, 
let $R = T \cap C_k$ and $R_k = C_k \setminus T$. 
Then, we have
\begin{align}
v(R_k \cup R) & =  v(C_k) = \sum_{i \in C_k} y_i. \label{eq: tight}
\end{align}
\noindent
The second equality follows from 
$y(C_k) = v(C_k)$ for each $k = 1, \dots, q$ 
(cf., Theorem \ref{Theorem:structureofDecmin}). 
By the definition of $v^y_T$, (\ref{eq: tight}) 
and the inequality of Claim \ref{Claim}, 
we obtain
\begin{align}
v^y_T(R)  & = \max\{v(Q \cup R) - y(Q) \mid Q \subseteq N \setminus T\} \geq  v(R \cup R_k) - \sum_{i \in R_k} y_i \nonumber \\
& = \sum_{i \in R} y_i > \sum_{i \in R} x_i, \nonumber
\end{align}
\noindent
which contradicts $x \in {\rm LSS}(T, v^y_T) \subseteq C(T, v^y_T)$ (see also (\ref{eq: inclusion})). 
	
We now prove Claim \ref{Claim}. 
Assume, to the contrary, that
\begin{align}
\forall k \in \{1, \dots, q\} : \sum_{i \in T \cap C_k} y_i \leq \sum_{i \in T \cap C_k} x_i. \label{eq:notclaim}
\end{align}
\noindent
Under this assumption, 
we will show the value-equivalence of $x$ and $y_T$ on $T$, 
which contradicts the assumption that 
$x$ Lorenz-dominates $y_T$. 
Then we are done. 
\par
		
First, we show that $x$ and $y_T$ are value-equivalent on $T \cap C_1$. 
We may assume that $T \cap C_1 \neq \emptyset$. 
By Theorem \ref{Theorem:structureofDecmin} and (\ref{eq:near-uniform}), 
$y_j = \beta_1$ or $y_j = \beta_1 - 1$ holds for all $j \in T \cap C_1$. 
Two cases are to be distinguished.

\begin{enumerate}[(1)]
\item 
The case where $y_j = \beta_1$ 
for all $j \in T \cap C_1$. 
Since $x$ Lorenz-dominates $y_T$, 
we have $x_j \leq \beta_1$ 
for all $j \in T \cap C_1$. 
Then, this fact and (\ref{eq:notclaim}) show that 
$x_j = \beta_1$ holds for all $j \in T \cap C_1$. 
Therefore, $x$ and $y_T$ are value-equivalent on $T \cap C_1$.

\item 
The case where $y_j = \beta_1 -1$ 
for some $j \in T \cap C_1$. 
We show that
\begin{align}
\sum_{i \in T \cap C_1} y_i = \sum_{i \in T \cap C_1} x_i. \label{eq:ineq}
\end{align}
\noindent
Assume that $\sum_{i \in T \cap C_1} y_i < \sum_{i \in T \cap C_1} x_i$ holds. 
Then, since $x$ Lorenz-dominates $y_T$, 
$x_j \leq \beta_1$ holds for all $j \in T \cap C_1$. 
Hence, this inequality implies that 
the number of $\beta_1$-valued components of $y_T$ is strictly smaller 
than that of $x$ (see Figure \ref{Figure:Case1}), 
which contradicts the assumption that $x$ Lorenz-dominates $y_T$. 
Therefore, we have (\ref{eq:ineq}).
This equation, together with the facts that 
$x_i \leq \beta_1$ holds for all $i \in T \cap C_1$ 
and either $y_i = \beta_1$ or $y_i = \beta_1 -1$ holds,
shows that $x$ and $y_T$ are value-equivalent on $T \cap C_1$.
\end{enumerate}

\begin{figure}
\centering
\begin{equation}
\begin{aligned}
x :& \overbrace{\beta_1  = \dots = \beta_1 = \beta_1}^{r} > \overbrace{\beta_1 -1 = \dots = \beta_1 -1}^{|T \cap C_1| - r }  \nonumber  \\
y : & \overbrace{\beta_1 = \dots = \beta_1}^{{\rm at ~ most } ~{r-1} }> \beta_1 - 1 \geq \dots
\end{aligned}
\end{equation}
\caption{Values of $x$ and $y$ on $T \cap C_1$}
\label{Figure:Case1}
\end{figure}

Since $x$ and $y$ are value-equivalent on $T \cap C_1$ as above, 
(\ref{eq:notclaim}) implies the following:
\begin{align}
\sum_{i \in T \cap (C_2 \setminus C_1)} y_i \leq \sum_{i \in T \cap (C_2 \setminus C_1)} x_i. \label{eq:C_2 - C_1}
\end{align}
	
Next, we show the value-equivalence between
$x$ and $y_T$ on $T \cap C_2$, 
that is, $x$ and $y_T$ are value-equivalent on $T \cap (C_2 \setminus C_1)$. 
Note first that either $y_j = \beta_2$ or $y_j = \beta_2 - 1$ holds
for all $j \in T \cap (C_2 \setminus C_1)$
by Theorem \ref{Theorem:structureofDecmin} and (\ref{eq:near-uniform}).

\begin{enumerate}[(1)]
\item 
The case where 
$y_j = \beta_2$ for all $j \in T \cap (C_2 \setminus C_1)$. 
Then, since $x$ and $y_T$ are value-equivalent on $T \cap C_1$ 
and $x$ Lorenz-dominates $y_T$, 
we obtain  $x_j \leq \beta_2 \ (\forall j \in T \cap (C_2 \setminus C_1))$. 
This statement and (\ref{eq:C_2 - C_1}) show that 
$x_j = \beta_2$ holds for all $j \in T \cap (C_2 \setminus C_1)$, 
which implies that $x$ and $y_T$ 
are value-equivalent on $T \cap (C_2 \setminus C_1)$.
		
\item 
The case 
where $y_k = \beta_2 - 1$ for some $k \in T \cap (C_2 \setminus C_1)$. 
We will first show that $\sum_{i \in T \cap (C_2 \setminus C_1)} y_i = \sum_{i \in T \cap (C_2 \setminus C_1)} x_i$ holds. 
Assume that 
			
\begin{equation}
\sum_{i \in T \cap (C_2 \setminus C_1)} y_i < \sum_{i \in T \cap (C_2 \setminus C_1)} x_i.  \label{eq:case2}
\end{equation}

By the facts that $y_j$ is either $\beta_2$ or $\beta_2 - 1$ 
for any $j \in T \cap (C_2 \setminus C_1)$, 
$x$ Lorenz-dominates $y_T$, 
and $x$ and $y_T$ are value-equivalent on $T \cap C_1$, 
we have $x_j \leq \beta_2$ for all $j \in T \cap (C_2 \setminus C_1)$. 
Therefore, if (\ref{eq:case2}) is true, 
then the number of {$\beta_2$}-valued components of $y_T$ 
is strictly smaller than that of $x$ on $T \cap (C_2 \setminus C_1)$
(see Figure \ref{Figure:Case2}), 
which contradicts the assumption that $x$ Lorenz-dominates $y_T$ 
together, 
since $x$ and $y_T$ are value-equivalent on $T \cap C_1$.
\end{enumerate}

\begin{figure}
\centering
\begin{equation}
\begin{aligned}
x : & \overbrace{\overbrace{\beta_1  = \dots = \beta_1}^{r_1} > \overbrace{\beta_1 -1 = \dots = \beta_1 -1}^{|T \cap C_1| - r_1} }^{T \cap C_1}  \geq \overbrace{\overbrace{ \beta_2 = \dots = \beta_2 = \beta_2 }^{r_2} > \beta_2 - 1 = \dots = \beta_2 -1}^{T \cap (C_2 \setminus C_1)} 
\nonumber  \\
y : & \overbrace{\overbrace{\beta_1 = \dots = \beta_1}^{r_1} > \overbrace{\beta_1 -1 = \dots = \beta_1 -1}^{|T \cap C_1| - r_1}}^{T \cap C_1} \geq \overbrace{\overbrace{\beta_2 = \dots = \beta_2}^{{\rm at~ most~} r_2-1}> \beta_2 -1 \geq \dots}^{T \cap (C_2 \setminus C_1)}
\end{aligned}
\end{equation}
\caption{Values of $x$ and $y$ on $T \cap C_2$}
\label{Figure:Case2}
\end{figure}
From the above arguments, 
we obtain that $\sum_{i \in T \cap C_2} y_i = \sum_{i \in T \cap C_2} x_i$ 
and $x$ and $y_T$ are value-equivalent on $T \cap C_2$. 
By repeating this argument until $k = q$, 
we have that $x$ and $y_T$ are value-equivalent on $T$.
Thus the proof of Claim 1 is completed.
\end{proof}

\begin{remark}\label{CounterofEgaDM}
\rm
Here, we demonstrate that the egalitarian solution in discrete variables 
fails to have the Davis and Maschler reduced game property 
even in discrete convex games. 
We reconsider the game of Example \ref{CounterExample}. 
The value of $v^x_S$ and 
the set of egalitarian solutions of the reduced game 
with respect to $S = \{2,3\}$ and $x = (64,65,81) \in E(L(N,v))$ 
are given as in Table \ref{TableofRemark5.3}.
	
\begin{table}[htb]
\caption{Values of $v^x_S$ and the egalitarian solution of $(S, v^x_S)$}
\begin{center}
\begin{tabular}{cccc} 
$T$ & $v^x_S(T)$ & $EL(T,v^x_S)$ \\ \hline
\{2\} & 60 &  \{60\}   \\
\{3\} & 80 &  \{80\}    \\
\{2,3\} & 146 & \{(66,80)\} \\ \hline
\end{tabular}
\end{center}
\label{TableofRemark5.3}
\end{table}
	
The values of $v^x_S$ are calculated as follows:
\begin{align*}
&v^x_S(\{2\}) = \max\{v(\{2\}), v(\{1,2\}) - x_1\} = \max\{60, 110 - 64\} = 60,
\\ 
&v^x_S(\{3\}) = \max\{v(\{3\}), v(\{1,3\}) - x_1\} = \max\{80, 120 - 64\} = 80,
\\
&v^x_S(\{2,3\}) = v(\{1,2,3\}) - x_1 = 210 - 64 = 146. 
\end{align*}
\noindent
Then, we obtain $x_S = (65, 81) \notin EL(S, v^x_S)$, 
which shows that 
the egalitarian solutions outside the core 
fail to have the Davis and Maschler reduced game property 
even in discrete convex games. 
\qed\end{remark}
%------------------------------------------------------ 
\section{Converse reduced game property}
\label{Section:CRGP}
In this section, we consider 
the converse reduced game property of the Lorenz stable set
in discrete convex games. 
Peleg \cite{Peleg} defines 
the converse reduced game property as follows.

\begin{definition} \label{DefofCRGP} \rm (Converse reduced game property (Peleg \cite{Peleg}))
Let $\sigma$ be a solution on $\Gamma_{\mathbf{R}}$. 
A solution $\sigma$ is said to have the {\em converse reduced game property} 
if the following condition is satisfied:
For $x \in \mathbf{R}^N$ with $x(N) = v(N)$, 
if $(N,v) \in \Gamma_{\mathbf{R}}$ and $x_S \in \sigma(S, v^x_S)$ 
for every $S \subseteq N$ with $|S| = 2$, 
then $x \in \sigma(N,v)$ holds.
\qed\end{definition}

We define the converse reduced game property 
in Case $\mathbf{Z}$ by replacing $\mathbf{R}$ with $\mathbf{Z}$ 
in the above definition.
\par

Peleg \cite{Peleg} has already shown 
the converse reduced game property 
of the core.
This is also true for any discrete game. 
Its proof is exactly same as that of Peleg, 
but, 
we give the proof
for the sake of completeness.
Note that we do not assume 
the convexity of games in the following lemma.

\begin{lemma}\label{CoreCRGPLemma}
For any discrete game $(N,v) \in \Gamma_{\mathbf{Z}}$, 
the core satisfies the converse reduced game property.
\qed\end{lemma}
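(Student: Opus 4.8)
The plan is to follow Peleg's original argument for the converse reduced game property of the core, adapted to the discrete setting, and observe that integrality is automatically preserved. First I would fix a discrete game $(N,v)\in\Gamma_{\mathbf{Z}}$ and a vector $x\in\mathbf{Z}^N$ with $x(N)=v(N)$ such that $x_S\in C(S,v^x_S)$ for every two-element coalition $S\subseteq N$. The goal is to show $x\in C(N,v)$, i.e. $x(R)\ge v(R)$ for every $R\subsetneq N$ with $R\neq\emptyset$ (the equality $x(N)=v(N)$ is given). The key observation is that for any such $R$ we can pick a two-element coalition $S=\{i,j\}$ with $i\in R$ and $j\in N\setminus R$ (possible because $R$ is a nonempty proper subset, so as long as $|N|\ge 2$; the case $|N|=1$ being trivial). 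Then take the singleton $T=\{i\}\subsetneq S$ in the reduced game: the core condition $x_S\in C(S,v^x_S)$ gives $x_i=x(T)\ge v^x_S(T)$.

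Next I would unwind the definition of the reduced characteristic function: $v^x_S(\{i\})=\max_{Q\subseteq N\setminus S}\{v(\{i\}\cup Q)-x(Q)\}$. Choosing $Q=R\setminus\{i\}\subseteq N\setminus S$ (this is legitimate since $R\setminus\{i\}$ avoids both $i$ and $j$, hence lies in $N\setminus S$), we get
\[
v^x_S(\{i\})\ \ge\ v(\{i\}\cup(R\setminus\{i\}))-x(R\setminus\{i\})\ =\ v(R)-x(R)+x_i .
\]
Combining with $x_i\ge v^x_S(\{i\})$ yields $x_i\ge v(R)-x(R)+x_i$, i.e. $x(R)\ge v(R)$, which is exactly what we needed. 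Since $R\subsetneq N$ was arbitrary, $x\in C(N,v)$. Integrality of $x$ was assumed and is not disturbed, so $x\in C(N,v)$ as an element of $\mathbf{Z}^N$, and the discreteness of $(N,v)$ guarantees $(S,v^x_S)\in\Gamma_{\mathbf{Z}}$ for the two-element reduced games referenced in the hypothesis.

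The argument is essentially routine and I do not expect a serious obstacle; the only points requiring a little care are the degenerate cases ($|N|=1$, or $R$ such that no valid $S$ or $Q$ exists — but the choices above always work for $1\le|R|\le|N|-1$) and making sure the set $Q=R\setminus\{i\}$ is genuinely a subset of $N\setminus S$, which holds because $i\notin Q$ by construction and $j\notin R\supseteq Q$. I would also remark explicitly, as the paper does for the Davis–Maschler direction, that convexity is not used anywhere here, so the statement holds for all of $\Gamma_{\mathbf{Z}}$, not just $\Gamma^{\rm c}_{\mathbf{Z}}$.
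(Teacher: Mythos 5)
Your argument is correct and is essentially identical to the paper's proof, which also follows Peleg's original scheme: for each nonempty proper $T$ it picks $i\in T$, $j\notin T$, forms the two-player reduced game on $\{i,j\}$, and bounds $v^x_{\{i,j\}}(\{i\})$ from below by the term corresponding to $Q=T\setminus\{i\}$ to conclude $x(T)\geq v(T)$. Your additional remarks on integrality and the irrelevance of convexity match the paper's framing as well.
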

\begin{proof}
For a discrete game $(N,v) \in \Gamma_{\mathbf{Z}}$, 
let $x \in \mathbf{Z}^N$ be a vector satisfying 
$x(N) = v(N)$ and $x_S \in \sigma(S, v^x_S)$ 
for every $S$ with $|S| = 2$. 
Take any $T \subsetneq N ~ (T \neq \emptyset)$, 
$i \in T$, and $j \in N \setminus T$. Let $Q = \{i, j\}$. 
Then, by using $x_Q \in C(Q, v^x_Q)$, 
we obtain the following inequalities:
\begin{eqnarray*}
0 \geq v^x_Q(\{i\}) - x_i & \geq & v((T \setminus \{i, j\}) \cup \{i\}) 
- x(T \setminus \{i, j\}) - x_i  \\
& = & v(T) - x(T),
\end{eqnarray*}
\noindent
where the equality is due to $j \notin T$.
Therefore, $x(T) \geq v(T)$ holds for 
every $T \subsetneq N$. Also, we have $x(N) = v(N)$ 
by the hypothesis of $x$. Thus, we obtain $x \in C(N,v)$. 
\end{proof}

\subsection{Converse reduced game property in Case $\mathbf{Z}$}
\label{Sub6:CRGPinZ}
In this subsection, 
we prove the converse reduced game property 
of the Lorenz stable set
in Case $\mathbf{Z}$. 

\begin{theorem}\label{LASCRGPTheorem}
For any discrete convex game $(N,v) \in \Gamma^{\rm c}_{\mathbf{Z}}$,
the Lorenz stable set has the converse reduced game property, 
that is, for $x \in \mathbf{Z}^N$ with $x(N) = v(N)$, 
if $(N,v) \in \Gamma^{\rm c}_{\mathbf{Z}}$ and $x_S \in {\rm LSS}(S, v^x_S)$ 
for every $S \subseteq N$ with $|S| = 2$,  then $x \in {\rm LSS}(N,v)$ holds.
\qed\end{theorem}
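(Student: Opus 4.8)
The plan is to reduce the converse reduced game property of the Lorenz stable set to the structural characterization of dec-min elements via the canonical partition (Theorem~\ref{Theorem:structureofDecmin}), exploiting the equivalence between the Lorenz stable set and the set of dec-min elements of the core established in Proposition~\ref{LSAdecminEquivaProp}. So let $x \in \mathbf{Z}^N$ satisfy $x(N) = v(N)$ and $x_S \in {\rm LSS}(S, v^x_S)$ for every two-element $S$. First I would invoke Lemma~\ref{CoreCRGPLemma} (core converse reduced game property) to conclude immediately that $x \in C(N,v)$; since $(N,v)$ is a discrete convex game, $C(N,v)$ is an M-convex set, so it suffices to show that $x$ is decreasingly minimal in $C(N,v)$. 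By Theorem~\ref{1tighteningDecminTheorem} this is equivalent to showing that there is no $1$-tightening step for $x$ in $C(N,v)$; i.e., there are no $i, j \in N$ with $x_j \geq x_i + 2$ and $x + \chi_i - \chi_j \in C(N,v)$.

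Next I would argue by contradiction: suppose such a $1$-tightening step $i, j$ exists, with $x_j \geq x_i + 2$ and $x' := x + \chi_i - \chi_j \in C(N,v)$. The idea is to push this ``global'' improving move down to the two-player reduced game on $S = \{i, j\}$ and contradict $x_S \in {\rm LSS}(S, v^x_S)$. Since $x \in C(N,v)$ and $x' \in C(N,v)$, both $x_j$ and $x'_j = x_j - 1$ are feasible contributions, and one shows that $v^x_S(\{i\})$, $v^x_S(\{j\})$, $v^x_S(S)$ are attained in a way compatible with transferring one unit from $j$ to $i$: concretely, from $x' \in C(N,v)$ and the definition $v^x_S(\{i\}) = \max_{Q \subseteq N \setminus S}\{v(\{i\} \cup Q) - x(Q)\}$, I would check that $x_i + 1 \geq v^x_S(\{i\})$ still holds (using that $x'(\{i\} \cup Q) \geq v(\{i\} \cup Q)$ for all $Q$, and $x'(\{i\} \cup Q) = x(\{i\} \cup Q) + 1 = x_i + 1 + x(Q)$ when $j \notin Q$; the case $j \in Q$ needs the analogous bound with $x'_j$). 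Together with $x_j - 1 \geq v(\{j\})$ — which follows because $x \in C(N,v)$ forces $x_j \geq v(\{j\}) + $ (something), or more carefully because $x_j \geq x_i + 2 \geq v(\{i\}) + 2$ is not quite what is needed, so here one instead uses that $x'_S = (x_i + 1, x_j - 1) \in C(S, v^x_S)$, which is exactly the content of the core converse argument applied to the reduced game together with $x' \in C(N,v)$ and $v^{x'}_S = v^x_S$ on the relevant coalitions. This shows $x'_S \in C(S, v^x_S)$, and since $x'_S$ is obtained from $x_S$ by a Robin Hood transfer (the larger component $x_j$ decreased, the smaller $x_i$ increased, difference $\geq 2$), $x'_S$ Lorenz-dominates $x_S$ in the two-player core, contradicting $x_S \in {\rm LSS}(S, v^x_S)$.

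A subtle point is the equality $v^{x'}_S(T) = v^x_S(T)$ for $T \subsetneq S$: since $S = \{i,j\}$ and $x, x'$ differ only on $\{i,j\} = S$, we have $x(Q) = x'(Q)$ for every $Q \subseteq N \setminus S$, hence $v^x_S$ and $v^{x'}_S$ literally coincide on all $T \subseteq S$; this is what lets me compare $x_S$ and $x'_S$ inside the \emph{same} reduced game. I expect the main obstacle to be the careful verification that $x'_S \in C(S, v^x_S)$ — that is, that transferring a unit from $j$ to $i$ globally, while staying in $C(N,v)$, really does keep $x_i + 1 \leq v^x_S(\{i\})$ is automatic but the reverse direction (that $(x_i+1) + (x_j - 1) = x(S) = v^x_S(S)$ and each single-player constraint $x'_i \geq v^x_S(\{i\})$, $x'_j \geq v^x_S(\{j\})$) requires tracking which maximizing $Q$ is used and whether it contains elements whose $x$-value changed; since only $i$ and $j$ changed and $Q \subseteq N \setminus S$, this is clean, but it must be spelled out. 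Once $x'_S \in C(S, v^x_S)$ is in hand, the Lorenz-domination $x'_S \succ_L x_S$ is immediate from $x_j \geq x_i + 2$, and Proposition~\ref{LSAdecminEquivaProp} closes the argument.
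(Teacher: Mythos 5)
Your proposal is correct and follows essentially the same route as the paper's proof: obtain $x \in C(N,v)$ from the core's converse reduced game property, identify ${\rm LSS}$ with the dec-min elements of the core, use the $1$-tightening characterization to extract $i,j$ with $x_j \geq x_i + 2$ and $x' = x + \chi_i - \chi_j \in C(N,v)$, then apply the core's Davis--Maschler reduced game property to $x'$ together with the observation that $v^{x'}_{\{i,j\}} = v^x_{\{i,j\}}$ (since $x$ and $x'$ agree off $\{i,j\}$) to contradict $x_{\{i,j\}} \in {\rm LSS}(\{i,j\}, v^x_{\{i,j\}})$. The hedging in your middle paragraph about directly verifying the single-player core constraints is unnecessary --- the clean path you ultimately identify (core RGP for $x'$ plus the identity of the two reduced games) is exactly what the paper does.
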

\begin{proof}
Note first that $x \in C(N,v)$ by Lemma \ref{CoreCRGPLemma}. 
Suppose that $x \notin {\rm LSS}(N,v)$. 
Since the Lorenz stable set coincides with 
the set of dec-min elements of $C(N,v)$ 
by Proposition \ref{LSAdecminEquivaProp}, 
$x$ is not a dec-min element of $C(N,v)$. 
Therefore, by Theorem \ref{1tighteningDecminTheorem}, 
there is a 1-tightening step for $x$, 
that is, there exist some $i,j \in N$ such that 
$x_j \geq x_i + 2$ and $x' = x + \chi_i - \chi_j \in C(N,v)$. 
Since the core satisfies the Davis--Maschler reduced game property 
by Theorem \ref{Theorem:CoreDMRGP}, we have $x'_{\{i,j\}} \in C(\{i,j\}, v^{x'}_{\{i,j\}})$. 
Note that $x_{\{i,j\}} \in {\rm LSS}(\{i,j\},v^x_{\{i,j\}}) \subseteq C(\{i,j\}, v^x_{\{i,j\}})$ by the hypothesis. 
\par
		
Let $R = \{i,j\}$. Then, 
using the equation $x_{N \setminus R} = x'_{N \setminus R}$, 
we can show that 
\begin{align}
v^{x'}_R = v^x_R. \label{eq:exchange}
\end{align}
\noindent
Indeed, for any $T \subsetneq R ~ (T \neq \emptyset)$, we obtain
\begin{align*}
v^{x'}_R(T) &= \max\{v(T \cup Q) - x'(Q) \mid Q \subseteq N 
\setminus R\} \\
& = \max\{v(T \cup Q) - x(Q) \mid Q \subseteq N \setminus R\} \\
& = v^x_R(T),
\end{align*}
\noindent
where the first equality is due to (\ref{eq:reducedgame})
and the second equality follows from
$ x_{N \setminus R} = x'_{N \setminus R}$.
Also, if $T = \emptyset$, then we have 
$v^{x'}_{R}(\emptyset) = v^x_R(\emptyset) = 0$ 
by the definition of $v^{x'}_{R}$
(see Definition \ref{DefofReducedgame}). 
Similarly, for $T = R$, 
we can show $v^{x'}_R (R)= v^x_R(R)$ as follows:
\begin{align}
v^{x'}_R(R) &= v(N) - x'(N \setminus R) \nonumber \\
& = v(N) - x(N \setminus R) \nonumber \\
& = v^x_{R}(R). \nonumber
\end{align}
\noindent
From the above arguments we obtain (\ref{eq:exchange}). Hence we have 
\begin{align}
x'_R \in C(R, v^{x'}_R) = C(R, v^x_R).
\end{align}
\noindent
It follows from $x_j \geq x_i + 2$ that $x'_R= (x_i +1,x_j -1)$ 
Lorenz-dominates $x_R= (x_i,x_j)$, which contradicts $x_R \in {\rm LSS}(R,v^x_R)$ 
(see also Definition \ref{DefofLSA}).
\end{proof}
%------------------------------------------------------ 

\section{Conclusion}
In this paper, we have pointed out that the egalitarian solution 
does not have nice properties in games with discrete side payment. 
Then, we have focused on the Lorenz stable set 
and shown that it has nice properties 
such as the Davis and Maschler reduced game property 
and the converse reduced game property.
The existence of the egalitarian solution of the core 
in discrete convex games is left for the future.

\section{Acknowledgment}
The author is grateful to Kazuo Murota 
for suggesting this research and for encouragement. 
He also thanks Takuya Iimura and 
Takahiro Watanabe for helpful comments.

%------ corresponding author ------------------ 
%\vspace{15pt}
%{\leftskip 8.5cm \parindent 0mm
%	Takafumi Otsuka\\
%	Department of Economics \\
%	and Business Administration \\
%	Tokyo Metropolitan University \\
%	1-1 Minami-osawa, Hachioji \\
%	Tokyo 192-0397, Japan \\ 
%	E-mail: \texttt{otsuka-takafumi1@ed.tmu.ac.jp}
%	\par}
%------------------------------------------------------ 
\end{document}